%% file: main.tex
\documentclass[11pt]{article}

\input{00_preamble}

\title{\bfseries
    A Limit Order Market with Uncertain Informed Trading Participation
}

\author{
    Umut \c{C}etin\thanks{\href{mailto:u.cetin@lse.ac.uk}{u.cetin@lse.ac.uk}}
    \quad
    Mingwei Lin\thanks{\href{mailto:m.lin20@lse.ac.uk}{m.lin20@lse.ac.uk}}
}

\affil{
    Department of Statistics, London School of Economics
}

\begin{document}
\date{}  
\maketitle

\begin{abstract}
We study a one period limit order market with informed traders, noise traders, and competitive liquidity suppliers, in which the number of informed traders is random. Liquidity suppliers know the distribution of the informed trader count, but not its realization, and therefore face uncertainty about both the presence and the intensity of informed trading. We characterize equilibrium by a fixed point integral equation for the marginal cost function and establish existence of equilibrium for bounded asset values. We then analyse large order asymptotics. For bounded asset values with power law endpoint behaviour, equilibrium price impact follows a power law whose exponent is determined jointly by the asset value tail and the full distribution of the informed trader count. In particular, this exponent is not determined by the expected number of informed traders alone. In the light endpoint regime, price impact is instead logarithmic. Finally, we solve the fixed point numerically across several asset value and informed trader count distributions. The numerical results are consistent with the theoretical asymptotics in the cases covered by the theory and provide comparative statics beyond them.
\end{abstract}

\section{Introduction}

A central question in market microstructure is how adverse selection shapes the price response to order flow. 
Classical asymmetric information models formalize this mechanism in settings where the informational structure is known to market participants. 
In the auction style setting of \cite{Kyle_1985}, a single risk neutral informed trader trades against noise traders and a risk neutral market maker, and equilibrium price impact is linear. The continuous time extension of \cite{Back1992} preserves this linear price impact structure. In quote driven and limit order book (LOB) markets, however, trades are not all executed at a single market clearing price. Transaction prices vary with order size, direction, and the state of the book. 
In the sequential trade model of \cite{GlostenMilgrom_1985}, the positive bid ask spread reflects adverse selection and depends on the probability of informed trading. A large body of LOB models \citep[e.g.,][]{Glosten_1994,parlour.1998, foucault.1999, rosu.2009, cont_etal.2014} further studies discriminatory pricing, order placement, adverse selection, book depth, and nonlinear price impact.

Price impact provides a direct measure of how order flow is incorporated into prices and how liquidity adjusts to adverse selection. Empirical evidence suggests that market impact is nonlinear and concave in trade size, with functional forms ranging from square root to logarithmic \citep[e.g.,][]{Barra1997,potters_bouchaud.2003, almgren.2005, Moro_etal.2009, bershova_andrakhlin.2013, zarinelli_etal.2015, toth_etal.2016}. While single price auction models are relatively tractable, LOB models require the equilibrium determination of an entire pricing rule, namely a function mapping order size into marginal transaction prices, rather than a scalar price.
In this paper, we study this problem in a rational expectations limit order market with uncertainty about informed trading participation.

A common assumption in theoretical models of asymmetric information is that the presence, or number, of informed traders is common knowledge.
This assumption underlies many classical analyses of price formation, market efficiency, and equilibrium structure.
For example, \cite{Grossman_1980} studies how the number of informed traders affects the informativeness of prices when the proportion of informed traders is common knowledge in a rational expectations equilibrium. Similarly, each investor in \cite{hellwig.1980} is certain about the number of other investors. Such assumptions impose a strong degree of knowledge about the informational structure of the market.

In practice, market participants are uncertain not only about asset fundamentals, but also about whether informed traders are present and how intensively they trade. \cite{easley_ohara.1987} propose a sequential trade model in which market makers face uncertainty about whether trades come from informed or uninformed investors, and update their beliefs using trade direction and discrete trade size. This leads to the probability of informed trading (PIN) as a measure of adverse selection. \cite{easley_etal.2014} consider a different form of informational uncertainty in a Gaussian framework, where market participants are uncertain not about the existence of informed traders, but about their strategic behaviour, a setting they call opaque trading. \cite{banerjee_green.2015} study a market in which uninformed traders are uncertain whether they are trading against rational informed traders or irrational noise traders, with only one type active at a time. \cite{avery_zemsky.1998} and \cite{papadimitriou.2023} consider models in which the proportion of informed traders is uncertain and learned over time. \cite{gao_etal.2013} introduces uncertainty about the proportion of informed traders in a rational expectations equilibrium and obtains nonlinear price impact. \cite{li.2013} and \cite{back_etal.2013} extend \cite{Kyle_1985} and \cite{Back1992} to continuous time settings with uncertainty about the existence of a monopolistic informed trader.
Recent empirical work also demonstrates a growing interest in quantifying informed trading uncertainty. \cite{collin-dufresne_fos.2015} provide evidence that informed traders strategically time their participation, especially around major information events. \cite{bogousslavsky_etal.2024} develop machine learning methods, including gradient boosted trees, to estimate informed trading intensity from high frequency data and improve the detection of informational asymmetries.

This paper introduces random informed trader participation into a rational expectations limit order market and studies its effect on equilibrium liquidity and large order market impact.
Liquidity suppliers observe only aggregate order flow. They know the distribution of the informed trader count, but not its realization. 
Hence they face uncertainty not only about the asset value, but also about the intensity of informed trading. 
As a result, the equilibrium pricing rule depends on the entire distribution of informed participation.
The main result is that the law of the informed trader count affects not only the level of liquidity, but also the asymptotic form of equilibrium price impact.

The equilibrium is characterized by a fixed point integral equation for the marginal cost function \(F\). Given \(F\), the equilibrium limit order book is recovered from the pricing operator \(h^*=\phi_F\). 
This formulation accommodates general asset value distributions and general distributions for the number of informed traders. 
In the uncertain monopolistic benchmark, the model yields explicit expressions for the bid ask spread and the informed trader's expected profit, illustrating how uncertainty about the presence of an insider affects the bid--ask spread.

The second contribution concerns the large order asymptotics of equilibrium price impact. For bounded asset values with power law endpoint behaviour, equilibrium price impact follows a power law. The corresponding exponent is characterized by a fixed point equation that depends jointly on the endpoint behaviour of the asset value and the distribution of the number of informed traders. Thus, the distribution of informed trading participation affects the asymptotic shape of the limit order book. In particular, the effect is not summarized by the expected number of informed traders alone. In the light endpoint regime, price impact is instead logarithmic.

The final contribution is numerical. The numerical experiments are consistent with the predicted power law and logarithmic tail behaviour in the cases covered by the theory, and provide exploratory evidence for specifications outside the compact support assumptions. We also study comparative statics with respect to the distribution of the informed trader count and the bid ask spread under a two point count distribution. In the specifications considered, greater dispersion in the informed trader count, holding the conditional mean fixed, flattens the book.

The remainder of the paper is organized as follows. Section~\ref{section:MarketStructure} introduces the market structure. Section~\ref{section:Equilibrium} characterizes equilibrium and establishes its existence. Within this section, Section~\ref{section::monopolistic} studies the uncertain monopolistic informed trader benchmark, while Section~\ref{section::generalRandom} develops the general model with a random number of informed traders. Section~\ref{section:PriceAsymptotics} analyses the large order asymptotics of equilibrium price impact. Section~\ref{section:Numerics} presents the numerical experiments, and Section~\ref{section:Conclusion} concludes.

\section{The Structure of the Limit Order Market}\label{section:MarketStructure}
We consider a one period limit order market with competitive liquidity suppliers, noise traders, and a random number of informed traders. The model follows the limit order market structure of~\cite{Cetin_Waelbroeck_2023}, but introduces uncertainty about informed trading participation. Liquidity suppliers do not observe whether informed traders are present, nor how many of them are active. They know only the distribution of the informed trader count. As a result, the limit order book reflects not only adverse selection about the asset value, but also uncertainty about the intensity of informed trading.

All random variables discussed in this section are presumed to be defined on a complete probability space $(\Omega,\mathcal{F},\mathcal{P})$, and $E$ denotes the expectation operator corresponding to $\mathcal{P}$.

In this one-period limit order market, a single asset is traded through the limit order book. The trades are assumed to take place at $t=0$ and $t=1$. The risk-free interest rate is set to be $r=0$. The fundamental value of the trading asset $V$ will be revealed to the public at time $t=1$. The limit order market consists of four types of agents:
\begin{enumerate}
    \item Competitive (infinitely many) liquidity suppliers. They do not know whether there are informed traders in the market or how many there are. Their only shared knowledge concerning the insiders is the distribution of the quantity of informed traders within the market. 
    Based on their information set, they move first and place limit orders to the limit order book. The resulting limit order book is characterised by a nonconstant, nondecreasing function $h:\mathbb{R}\rightarrow\mathbb{R}$. Therefore, a market order of size $x$ traded against the limit order book incurs the cost of 
    \begin{align*}
        \int_{0}^{x}h(y)\, dy.
    \end{align*}
    The function $h(y)$ represents the marginal price of the $y$-th share. 
    \item Noise traders. Their total trading demand is given by $ Z \sim N(0,\sigma^2)$. They are non-strategic, i.e. $Z$ is independent of the asset fundamental value $V$.
    \item Risk neutral (symmetric) informed traders $N$. The number of informed traders $N$ is a discrete random variable with probability mass function $p(n):= P(N=n),$ with $n=0,1,\dots,\, E[N]<\infty$. Moreover, the quantity of informed traders in the market is exogenous. That is, $N,\,V$ and $Z$ are mutually independent. When present, informed traders observe the terminal value $V$ at time $t=0$ and choose the (same) trade size $x$ to maximise their expected profit conditional on their private information. 
    \item A trading desk. The trading desk acts as a broker, it only gathers and executes orders for its clients. Specifically, the trading desk aggregates all orders and trades $Z+Nx$ with the liquidity suppliers.
\end{enumerate}
Moreover, noise traders orders are assumed to arrive at the desk simultaneously and prior to the informed orders, when present, and the orders are charged proportional to their order size. That is, if $U$ denotes the aggregate demand of the other informed traders, the cost of trading $x$ units for an individual informed trader is 
\begin{align}\label{eq:cost_structure}
    \frac{x}{U+x} \int_{0}^{U+x}h(Z+y)\, dy.
\end{align}

\section{Insiders' Optimal Strategies and Limit Price in Equilibrium}\label{section:Equilibrium}
In this section, we first consider the case of a monopolistic informed trader in the market, while the liquidity suppliers are unsure about her presence. We then extend the analysis to a general random informed traders.
\subsection{Uncertain Monopolistic Informed Trading}\label{section::monopolistic}
When there is a monopolistic informed trader who knows the terminal fundamental value of the asset $V$ at time $t=0$, she seeks to exploit her information advantage by determining the trade size $x$ to maximise her expected profit. However, liquidity suppliers in the market are uncertain about the existence of such an informed trader. As a result, the monopolistic insider can obtain higher expected profits within the limit order market. Meanwhile, liquidity suppliers shall adjust the bid-ask spread, and consequently, their expected revenues from noise traders, according to their beliefs on the presence of the insider.
\subsubsection{Optimal strategy for the monopolistic informed trader}
The optimal trading size $x$ for the monopolistic informed trader is
\begin{align*}
    \arg\max_{x\in\mathbb{R}}E\left[Vx-\int_{0}^{x}h(Z+u)\,du\Big\lvert V=v\right].
\end{align*}
Since $h$ is non-decreasing, first-order condition is sufficient to this optimisation problem. Define
\begin{align*}
    F(x):=E\left[h(Z+x)\right]=\int_{-\infty}^{\infty}q(\sigma,z-x)h(z)\,dz,
\end{align*}
which is the expected marginal cost of the $x$-th share. Then, given a limit order book $h$, the optimal strategy for the monopolistic insider is to trade $x^*$ such that $F(x^{*})=V$. Moreover, $F(x)$ is strictly increasing and one-to-one, since $h(\cdot)$ is not constant and non-decreasing.
\subsubsection{The limit price and the equilibrium}
The limit prices following \cite{Glosten_1994} are given by tail expectations. That is, 
\begin{align}\label{eq:limit_price}
    h(y):=
    \begin{cases}
        E[V\lvert Y\geq y] , \text{ if $ y>0 $,}\\
        E[V\lvert Y< y] , \text{ if $ y<0 $},\\
    \end{cases}
\end{align}
where $Y$ is the total demand from insider, when present, and noise traders. These limit prices guarantee the zero expected aggregate profit for liquidity suppliers in the sense that
\begin{align*}
    & E\Big[\int_{0}^{Y}(h(y)-V)\, dy\Big] \\
    = &\int_{0}^{\infty} E[(h(y)-V)1_{[Y\geq y]}] \, dy + \int_{-\infty}^{0} E[(V-h(y))1_{[Y\leq y]}] \, dy = 0.
\end{align*}
Moreover, the bid-ask spread is characterised by $h(0+)-h(0-):=\lim_{y\rightarrow 0^{+}}h(y)-\lim_{y\rightarrow 0^{-}}h(y)$.

Liquidity suppliers are uncertain about whether there is an informed trader in the market. Their common beliefs are that the probability of having an insider in the market is $p$. In other words, $N\sim \text{Bernoulli}(p),\,0\leq p\leq 1$. Therefore, for $y>0$, the marginal ask price is
\begin{align*}
    h(y) &= \frac{E\left[V 1_{[Y \geq y]}\right]}{P(Y \geq y)} = \frac{E\left[V 1_{ [Nx + Z \geq y]}\right]}{P(Nx + Z \geq y)}\\
         &= \frac{(1-p) E\left[V 1_ {[Z \geq y]}\right] + p E\left[V 1_{ [x + Z \geq y]}\right]}{(1-p) P(Z \geq y) + p P(x + Z \geq y)}.
\end{align*}
In equilibrium, we have $x^{*}=F^{-1}(V)$. And denote the limit prices in equilibrium by $h^{*}(y)$, since $V\perp Z$,
\begin{align*}
    h^{*}(y)&=\frac{(1-p)E[V]P(Z\geq y)+p E[V1_{[x^{*}+Z\geq y]}]}{(1-p)P(Z\geq y)+p P(x^{*}+Z\geq y)}\\
    &=\frac{(1-p)E[V]P(Z\geq y)+pE[V1_{[V\geq F(y-Z)]}]}{(1-p)P(Z\geq y)+pP(V\geq F(y-Z))}.
\end{align*}
\begin{example}\label{example:binary}
Consider a binary asset value satisfying \(P(V=1)=P(V=-1)=1/2\), as in Example~3.4 of \cite{Cetin_Waelbroeck_2023}. This example shows how uncertainty about the existence of a monopolistic informed trader affects both the bid--ask spread and the insider's expected profit.

By monotonicity of \(F\), \(\lim_{x\rightarrow\infty}F(x)=1\) and \(\lim_{x\rightarrow-\infty}F(x)=-1\). Hence the insider trades \(x^*=+\infty\) when \(V=1\), and \(x^*=-\infty\) when \(V=-1\). Suppose liquidity suppliers assign probability \(p\in(0,1]\) to the existence of the insider. For \(y>0\), the equilibrium ask price is
\[
\begin{aligned}
    h^*(y)
    &=
    \frac{pP(V=1)}
    {(1-p)P(Z\ge y)+pP(V=1)}
    =
    \frac{1}{1+\frac{2(1-p)}{p}P(Z\ge y)} .
\end{aligned}
\]
Similarly, for \(y<0\), the equilibrium bid price is
\[
    h^*(y)
    =
    -\frac{1}{1+\frac{2(1-p)}{p}P(Z<y)} .
\]
Thus \(h^*(0+)=p\), \(h^*(0-)=-p\), and the equilibrium bid--ask spread is \(h^*(0+)-h^*(0-)=2p\). When \(p=0\), there is no informed trader and \(h(y)=0\) for all \(y\). When \(p=1\), the limit price becomes \(h^*(y)=1_{\{y>0\}}-1_{\{y<0\}}\), recovering the deterministic monopolistic-insider case. Hence uncertainty about the insider reduces the spread relative to the deterministic case, while the spread increases with the perceived probability \(p\).

The same uncertainty increases the monopolistic insider's expected profit. For \(V=1\),
\[
\begin{aligned}
& E\left[
    \int_0^\infty \left(1-h^*(Z+y)\right)\,dy
    \,\bigg|\, V=1
\right] \\
\qquad =&
\sigma\sqrt{\frac{2}{\pi}}
+
\int_0^\infty
\left(
1-\frac{1}{1+\frac{2(1-p)}{p}P(Z\ge y)}
\right)
\left(P(Z>-y)-P(Z<-y)\right)\,dy \\
\qquad \geq&
\sigma\sqrt{\frac{2}{\pi}} .
\end{aligned}
\]
The correction term is non-negative and finite, since
\(0\leq 1-(1+\frac{2(1-p)}{p}P(Z\ge y))^{-1}
\leq \frac{2(1-p)}{p}P(Z\ge y)\), and
\(\int_0^\infty P(Z\ge y)\,dy<\infty\). Thus, compared with the deterministic monopolistic-insider case, uncertainty about the insider lowers the spread but increases the insider's expected profit.
\end{example}

The preceding example is special because the binary asset value leads to full
revelation whenever the insider is present. For a general asset distribution, the
bid--ask spread can be expressed in terms of the equilibrium marginal cost
function \(F\). Let \(x^*(V)=F^{-1}(V)\). If
\(Z\sim\mathcal N(0,\sigma^2)\), then
\[
    h^*(0+)
    =
    \frac{
        E\left[V\Phi\left(\frac{NF^{-1}(V)}{\sigma}\right)\right]
    }{
        E\left[\Phi\left(\frac{NF^{-1}(V)}{\sigma}\right)\right]
    },
    \qquad
    h^*(0-)
    =
    \frac{
        E\left[V\Phi\left(-\frac{NF^{-1}(V)}{\sigma}\right)\right]
    }{
        E\left[\Phi\left(-\frac{NF^{-1}(V)}{\sigma}\right)\right]
    } .
\]
Thus the spread is \(h^*(0+)-h^*(0-)\). In the two-point case
\(P(N=0)=1-p\) and \(P(N=m)=p\), this becomes
\begin{equation}\label{eq:twopoint_spread}
    h^*(0\pm)
    =
    \frac{
        \frac{1-p}{2}E[V]
        +pE\left[V\Phi\left(\pm\frac{mF^{-1}(V)}{\sigma}\right)\right]
    }{
        \frac{1-p}{2}
        +pE\left[\Phi\left(\pm\frac{mF^{-1}(V)}{\sigma}\right)\right]
    } .
\end{equation}
Here \(p\) is the probability of informed trading, while \(m\) is the number of
informed traders conditional on their presence. For a fixed \(F\), the formula
shows how increasing \(p\) or \(m\) changes the spread in the two-point
specification. This monotonicity, however, need not extend to a general
distribution of \(N\), since the law of \(N\) enters both the bid--ask spread
expression and the equilibrium fixed point determining \(F\).

From the example, the uncertainty about the insider raises the
insider's expected profit. Appendix~\ref{appendix:profit} shows numerically that the same pattern persists for continuous value distributions and a general random number of competing insiders.

\subsection{Random Numbers of Informed Traders}\label{section::generalRandom}
We now consider the case in which liquidity suppliers in the market are still uncertain about the existence of informed trader. Furthermore, the market may contain multiple informed traders. Assume that all market participants, including liquidity suppliers, informed traders, and noise traders, share a common belief regarding informed traders in the market, characterised by an exogenous random variable $N=0,1,2,\dots,\, E[N]<\infty$. Consequently, liquidity suppliers provide limit prices based on this understanding of uncertainty. Meanwhile, informed traders will exploit their private information advantage while considering the presence of other insiders.
\subsubsection{Optimal strategies for informed traders}
For an individual informed trader, the number of insiders in the market follows a conditional distribution, conditioned on her own existence. This conditional law represents an informed trader's belief about the number of competing insiders given her own presence. Specifically,
\begin{align*}
P(N=n\lvert N\geq 1)=
    \begin{cases} 
    0, &  n = 0, \\
    \frac{p(n)}{1-p(0)}, & n = 1, 2, 3, \ldots.
\end{cases}
\end{align*}
Given the cost structure in \eqref{eq:cost_structure}, an individual informed trader, conditional on being present, chooses $x$ to maximize the expected profit, by choosing a trade size $x$ conditioning on her private information on the asset value $V$ and on $N\geq 1$. That is,
\begin{align}
    \arg\max_{x\in\mathbb{R}}E\left[ Vx -  \frac{x}{U+x} \int_{0}^{U+x}h(Z+y)\, dy \Big\lvert V=v, N\geq1\right],
\end{align}
where $U$ is the random variable, denoting the total trading amount from the other informed traders.
The first order condition of the above maximization problem gives us
\begin{align*}
    V=E^{v}\left[ \frac{x}{U+x}h(Z+U+x)+\frac{U}{(U+x)^2}\int_{0}^{U+x}h(Z+y)\,dy \Big\lvert N\geq 1 \right],
\end{align*}
where $E^v$ is the expectation operator for the informed trader with the private information $V=v$. Assume every insider has symmetric information and is risk neutral, i.e. $U=(N-1)x$, the first order condition associate with the above optimisation problem of an individual insider is given by
\begin{align*}
    V = E^{v}\left[\frac{h(Z+Nx^{*})}{N}+\frac{N-1}{N^{2}x^{*}}\int_{0}^{Nx^{*}}h(Z+u)\,du\Big\lvert N\geq1\right].
\end{align*}
To characterize the optimal strategy for each of the insider, the first order condition can be written as $ V = F(x^{*}) $, where
\begin{align} \label{eq:F(x)}
    \begin{split}
        F(x):=&E^{v}\left[\frac{h(Z+Nx)}{N}+\frac{N-1}{N^{2}x}\int_{0}^{Nx}h(Z+u)\,du\Big\lvert N\geq1\right]\\
        =&\sum_{n\geq1}\frac{p(n)}{1-p(0)}\int_{-\infty}^{\infty}\left\{ \frac{1}{n}q(\sigma,z-nx)+\frac{n-1}{n}\Bar{q}(\sigma,n,x,z) \right\}h(z)\,dz,
    \end{split}
\end{align}
where $Z\sim N(0,\sigma^2),\,q(\sigma,\cdot)$ is the probability density function of a mean-zero Gaussian random variable with variance $\sigma^2$, and $\Bar{q}(\sigma,n,x,z):=1_{[x\neq0]}\frac{1}{x}\int_{0}^{x}q(\sigma,z-ny)\,dy + 1_{[x=0]}q(\sigma,z)$.

Since \(h\) is nonconstant and nondecreasing, convolution with the strictly positive Gaussian density implies that \(F\) is strictly increasing. Hence the optimal strategy is obtained by inverting \(F\), namely \(x^*=F^{-1}(V)\). 

\subsubsection{The limit price and the equilibrium}
The limit price is in the same form as \eqref{eq:limit_price}.
In the present setting, the total demand is $ Y=Nx+Z $, is equal to the total demand from insiders $ Nx $ and the total demand from noise traders $ Z $.
\begin{definition}\label{def:V_operators}
    Suppose the support of the random variable $ V $ is $ (m,M) $, where $ -\infty\leq m<M\leq\infty $. We define the following right-continuous functions:
    \begin{align*}
        \Phi^{+}(y)&:=E[V1_{[V>y]}],\, \Pi^{+}(y):=P(V>y),\\
        \Phi^{-}(y)&:=E[V1_{[V\leq y]}],\, \Pi^{-}(y):=P(V\leq y)=1-\Pi^{+}(y).
    \end{align*}
    Furthermore, define on the support of $ V $:
    \begin{align*}
        \Psi^{\pm}(y)&:=\frac{\Phi^{\pm}(y)}{\Pi^{\pm}(y)},
    \end{align*}
    so that $ \Psi^{+}(y) =E[V\lvert V>y]$ and $ \Psi^{-}(y):=E[V\lvert V\leq y] $. Note that $\Phi^{+}(x-)=E[V1_{[V\geq x]}]=\Phi^{+}(x)$ for almost all $x$. Analogously for $\Pi^{+}(x-)$ and $\Psi^{+}(x-)$.
\end{definition}

Let us consider the ask-side limit price, that is, when $y>0$,
\begin{align*}
    h(y)
    &=E[V\lvert Y\geq y]=\frac{E[V1_{[Y\geq y]}]}{P(Y\geq y)}=\frac{E[V1_{[Nx+Z\geq y]}]}{P(Nx+Z\geq y)}\\
    &=\frac{p(0)E[V1_{[Z\geq y]}]+\sum_{n\geq0}p(n)1_{[n\neq0]}E[V1_{[nx+Z\geq y]}]}{p(0)P(Z\geq y)+\sum_{n\geq0}p(n) 1_{[n\neq 0]}P(nx+Z\geq y)}. 
\end{align*}
In equilibrium, $F(x^{*})=V$, which gives us $x^{*}=F^{-1}(V)$. The limit price in equilibrium is therefore
\begin{align}
    h^{*}(y)
    &=\frac{p(0)E[V1_{[Z\geq y]}]+\sum_{n\geq0}p(n)1_{[n\neq0]}E[V1_{[V\geq F(\frac{y-Z}{n})]}]}{p(0)P(Z\geq y)+\sum_{n\geq0}p(n) 1_{[n\neq 0]}P(V\geq F(\frac{y-Z}{n}))}\nonumber\\
    &=\frac{E[V1_{[N=0]}1_{[Z\geq y]}+1_{[N\neq0]}\Phi^{+}(F(\frac{y-Z}{N}))]}{E[1_{[N=0]}1_{[Z\geq y]}+1_{[N\neq0]}\Pi^{+}(F(\frac{y-Z}{N}))]}\label{eq:h^{*}(y)_def}\\
    &=\frac{E[V1_{[N=0]}1_{[B_1 \leq 0]}+1_{[N\neq0]}\Phi^{+}(F(\frac{B_1}{N}))]}{E[1_{[N=0]}1_{[B_1 \leq 0]}+1_{[N\neq0]}\Pi^{+}(F(\frac{B_1}{N}))]},\label{eq:h^{*}(y)_equiv_pf}
\end{align}
where $B_{t}=y+\sigma\beta_t$, $\beta_t$ is a standard Brownian motion. And $B_1$ is this Brownian motion at $t=1$. Similarly for the bid side prices $h^{*}(y)$, when $y<0$.\\
In order to obtain an equation of $F$ in equilibrium, it is convenient to define the mappings associated with $F$.
\begin{definition}\label{def:phi_g}
    For any continuous function $g$, according to \eqref{eq:h^{*}(y)_def}, define the mappings
    \begin{align*}
        \phi^{+}_{g}(x):=& \frac{E[V1_{[N=0]}1_{[Z\geq x]}+1_{[N\neq0]}\Phi^{+}(g(\frac{x-Z}{N}))]}{E[1_{[N=0]}1_{[Z\geq x]}+1_{[N\neq0]}\Pi^{+}(g(\frac{x-Z}{N}))]}\\
        \phi^{-}_{g}(x):=&\frac{E[V1_{[N=0]}1_{[Z\leq x]}+1_{[N\neq0]}\Phi^{-}(g(\frac{x-Z}{N}))]}{E[1_{[N=0]}1_{[Z\leq x]}+1_{[N\neq0]}\Pi^{-}(g(\frac{x-Z}{N}))]}\\
        \phi_{g}(x):=&\phi_{g}^{+}(x)1_{[x>0]}+\phi_{g}^{-}(x)1_{[x<0]}.
    \end{align*}
\end{definition}
Thus, the limit prices in equilibrium can be written as $h^{*}(x)=\phi_{F}(x),\,x\in\mathbb{R}$. Combining this expression with~\eqref{eq:F(x)} yields the equation of $F$ in equilibrium.
\begin{align}\label{eq:F_FixedPointMapping}
    F(x)=\sum_{n\geq1}\frac{p(n)}{1-p(0)}\int_{-\infty}^{\infty}\left\{ \frac{1}{n}q(\sigma,z-nx)+\frac{n-1}{n}\bar{q}(\sigma,n,x,z) \right\}\phi_{F}(z)\,dz, 
\end{align}
where
\begin{align}
    \bar{q}(\sigma,n,x,z):=1_{[x\neq0]}\frac{1}{x}\int_{0}^{x}q(\sigma,z-ny)\,dy + 1_{[x=0]}q(\sigma,z). 
\end{align}
The preceding derivations show that the existence of equilibrium in the limit order market is equivalent to find $F$ so that \eqref{eq:F_FixedPointMapping} is satisfied. In other words, finding the equilibrium reduces to solving the fixed point mapping problem \eqref{eq:F_FixedPointMapping}.
\begin{theorem}
    Equilibrium exists if and only if there exists a function $F: \mathbb{R}\rightarrow\mathbb{R}$ that satisfies \eqref{eq:F_FixedPointMapping}. Given such a solution $F$, $(x^{*},\phi_{F})$ constitutes an equilibrium, where $x^{*}=F^{-1}(V)$ is the optimal strategy for an individual insider, and $\phi_{F}$ is the limit prices in equilibrium as in Definition \ref{def:phi_g} and \eqref{eq:h^{*}(y)_def}.
\end{theorem}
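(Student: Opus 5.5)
The proof is a two-way verification that formalizes the derivation preceding the statement. An equilibrium means a pair $(x^*,h^*)$ with two properties. First, given $h^*$, each insider's strategy $x^*(V)$ maximizes the conditional expected profit, with the others playing $U=(N-1)x^*$. Second, $h^*$ is given by the tail-expectation rule \eqref{eq:limit_price} with $Y=Nx^*(V)+Z$, so that liquidity suppliers earn zero expected profit. Throughout I would take $h^*$ nonconstant and nondecreasing, as required in Section~\ref{section:MarketStructure}.

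\emph{Necessity.} Suppose $(x^*,h^*)$ is an equilibrium. Define $F$ from $h^*$ by \eqref{eq:F(x)}. Since $h^*$ is nonconstant and nondecreasing, convolution with the positive Gaussian density makes $F$ continuous and strictly increasing. I would then show that the objective is concave in $x$. The term $h(Z+Nx)/N$ is nondecreasing in $x$. The averaged term $\frac{1}{Nx}\int_0^{Nx}h(Z+u)\,du$ is also nondecreasing in $x$, being the average of a nondecreasing function over $[0,Nx]$. Hence the derivative of expected profit, $V-F(x)$, is nonincreasing, so the first order condition characterizes the maximizer and $x^*=F^{-1}(V)$ on the range of $F$. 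Substituting $x^*=F^{-1}(V)$ into \eqref{eq:limit_price} is exactly the computation leading to \eqref{eq:h^{*}(y)_def}: the event $\{nx^*+Z\ge y\}$ equals $\{V\ge F((y-Z)/n)\}$ by strict monotonicity of $F$. Conditioning on $Z$ and $N$, using the independence of $N$, $V$ and $Z$, gives $h^*=\phi_F$. Plugging this back into \eqref{eq:F(x)} yields \eqref{eq:F_FixedPointMapping}.

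\emph{Sufficiency.} Let $F$ solve \eqref{eq:F_FixedPointMapping} and set $h^*:=\phi_F$. I must check three things. (i) $\phi_F$ is nondecreasing and nonconstant, so that it is an admissible book and $F$ is strictly increasing. (ii) The right-hand side of \eqref{eq:F_FixedPointMapping} with this $h^*$ is exactly the $F$ of \eqref{eq:F(x)}; this is immediate from the fixed-point identity. (iii) With $x^*=F^{-1}(V)$, the tail-expectation formula reproduces $\phi_F$; this is the same computation as in the necessity part, read backwards. The zero-profit condition then follows from the identity displayed after \eqref{eq:limit_price}.

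For (i), I would argue that $\phi_F^+(x)=E[V\mid Y\ge x]$ for $Y=Nx^*+Z$. Monotonicity then follows from the fact that $Y$ is positively associated with $V$: $Y$ is a nondecreasing function of $V$ plus independent noise, with $N$ independent. Tail conditional expectations of $V$ given $\{Y\ge y\}$ are therefore nondecreasing in $y$, for example via an FKG/association inequality applied conditionally on $(N,Z)$. I would use the analogous argument on the bid side. At $0$ the ask exceeds the bid, since $E[V\mid Y\ge 0]\ge E[V]\ge E[V\mid Y<0]$. A strictly increasing $F$ also requires that $F$ be monotone for the argument in (i) to go through, but one can argue directly: if $F$ solves the fixed point with $\phi_F$ monotone, $F$ is strictly increasing.

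The subtle point is the circularity between monotonicity of $F$ and of $\phi_F$. I would resolve it by restricting the notion of "solution" in the theorem to nondecreasing continuous $F$. This is natural, since the text already presumes invertibility. I would also handle the case where $F$ has bounded range, as in Example~\ref{example:binary}, by allowing $x^*=\pm\infty$ when $V$ lies at the edge of the range of $F$. The main obstacle I expect is this admissibility and monotonicity check in (i), together with justifying concavity of the individual insider's objective so that the first order condition gives the global optimum. I would establish the latter by the monotone-average argument above.
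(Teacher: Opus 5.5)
Your overall route is the paper's own. The paper gives no separate proof: the statement packages the derivation before it, namely the first order condition $V=F(x^*)$ with $F$ from \eqref{eq:F(x)}, inversion, substitution into the tail-expectation rule to get $h^*=\phi_F$, and closing the loop. Necessity is fine at that level, and there you need no concavity, since the first order condition is necessary at an interior optimum. Restricting to nondecreasing continuous $F$ is also a sensible repair of the statement. However, the two places where you go beyond the paper do not work as written.

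\textbf{The concavity argument.} $V-F(x)$ is not the derivative of an individual insider's profit. $F$ arises by differentiating in the insider's own order with $U$ held fixed, and only afterwards substituting $U=(N-1)x$. Against $U=(N-1)x^*$, a deviation $x$ has marginal cost
$E\bigl[\tfrac{x}{U+x}h(Z+U+x)+\tfrac{U}{(U+x)^2}\int_0^{U+x}h(Z+y)\,dy\mid N\ge1\bigr]$,
which equals $F(x)$ only at $x=x^*$. Monotonicity of $F$ therefore gives uniqueness of the symmetric solution, not global optimality. Set $\bar h(s):=E[h(Z+s)]$, $I(s):=\int_0^s\bar h(y)\,dy$ and $s=U+x$. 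Then the cost $C_U(x)=\tfrac{x}{s}I(s)$ satisfies
\[
C_U''(x)=\bar h'(s)\,\frac{x}{s}+\frac{2U}{s^{3}}\bigl(s\bar h(s)-I(s)\bigr),
\]
which can be negative for $U>0>x>-U$ when $\bar h$ is steep at $s$.

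A correct argument does exist. Since $s\bar h(s)-I(s)\ge0$, $C_U$ is convex on deviations with the sign of $x^*$, so $x^*$ is optimal among those. For opposite-sign deviations, say $x^*>0>x$, use the nondecreasing running average $A(s):=I(s)/s$. The profit is then $|x|\,E[A(U+x)-V\mid N\ge1]\le|x|\bigl(E[A((N-1)x^*)\mid N\ge1]-V\bigr)\le0$. This holds because $V=F(x^*)=E[\tfrac1N\bar h(Nx^*)+\tfrac{N-1}{N}A(Nx^*)\mid N\ge1]$ and $\bar h(Nx^*)\ge A(Nx^*)\ge A((N-1)x^*)$. The paper simply identifies the optimum with the first order condition, so you would have to add this yourself. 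Note that it needs $h^*$ nondecreasing.

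\textbf{Step (i), the more serious gap.} Positive association (Chebyshev's covariance inequality conditionally on $(N,Z)$) gives $E[V\mid Y\ge y]\ge E[V]$ for each $y$. That suffices for your bid--ask comparison at $0$, but not for monotonicity of $y\mapsto E[V\mid Y\ge y]$. Monotonicity needs something like a TP2 kernel. For fixed $n$, $q(\sigma,y-nF^{-1}(v))$ is TP2 by log-concavity of the Gaussian, but a mixture over $n$ (plus the $N=0$ atom) need not be.

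Concretely, take $p(0)=0$, $P(N=1)=P(N=10)=\tfrac12$, $V\in\{v_1,v_2\}$ equally likely, and $F$ strictly increasing with $F(1)=v_1$, $F(2)=v_2$. As $\sigma\downarrow0$, $Y$ is approximately $1,2,10,20$ with probability $\tfrac14$ each. Hence $E[V\mid Y\ge1.5]\to(v_1+2v_2)/3$, while $E[V\mid Y\ge5]\to(v_1+v_2)/2$.

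So $\phi_F$ can decrease even when $F$ is nondecreasing, and (i) cannot follow from ``$Y$ is increasing in $V$ plus independent noise''. You must either exploit that $F$ solves \eqref{eq:F_FixedPointMapping}, or make monotonicity of $\phi_F$ part of the notion of solution. The latter is what the theorem implicitly presupposes, since admissible books are nondecreasing. The paper does not treat this inside the theorem; it later uses Lemma~\ref{lemma:phi_property}(4). There, however, the conditional law of $N$ given $B_1$ under $\mathbb Q^+$ depends on $B_1$, so its comparison argument needs monotonicity of the same mixture ratio and cannot simply be cited to close this step.
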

\subsection{Existence of the equilibrium}

We shall show the existence of an equilibrium when the asset value $V$ has bounded support $[m,M]$. Throughout this subsection, we impose a mild lower-tail finiteness condition. This condition is used to justify the interchange of integrals and limits involving the lower tail of the Gaussian kernel in the arguments below.

\begin{assumption}\label{assumption:lower_tail_finiteness}
For any function \(F\) considered below, the corresponding limit price function \(\phi_F\) satisfies
\[
    \int_{-\infty}^{0}\phi_F(z)q(\sigma,z)\,dz>-\infty .
\]
\end{assumption}

We first collect the main regularity and monotonicity properties of the pricing operator $g\rightarrow \phi_g$ introduced in Definition \ref{def:phi_g}. These properties will be used to study non-decreasing solutions of the fixed point equation.

\begin{lemma}\label{lemma:phi_property}
    Let \(g:\mathbb{R}\rightarrow[m,M]\) be a continuous function and \(\Psi^+(M-)=M\) and \(\Psi^-(m+)=m\). Suppose \(u^{+}\) (resp. \(u^{-}\)) is the unique solution of
    \begin{align*}
        u_{t}+\frac{1}{2}\sigma^2 u_{xx} &= 0,\\
        u(1,x)&=E_{N}\left[ 1_{[N=0]}1_{[x\leq0]}+1_{[N\neq0]}\Pi^{+}\left(g\left(\frac{x}{N}\right)\right) \right],\\
        (resp.\, u(1,x)&=E_{N}\left[ 1_{[N=0]}1_{[x\geq0]}+1_{[N\neq0]}\Pi^{-}\left(g\left(\frac{x}{N}\right)\right) \right]),
    \end{align*}
    where \(N\) is random with probability mass function \(p(n):=P(N=n)\), \(n=0,1,\dots\), and \(E_N\) in the terminal condition is the expectation with respect to \(N\). Then the following statements hold:
    \begin{itemize}
        \item[(1)] There exists a probability measure \(\mathbb Q\) on the joint space of \((B,N)\) such that the process \(B\) solves
        \begin{align}\label{eq:SDE}
            dB_{t}
            =
            \sigma dW_{t}^{\mathbb Q}
            +
            \sigma^{2}\frac{u_{x}(t,B_t)}{u(t,B_{t})}\,dt,
            \qquad B_{0}=x,
        \end{align}
        where \(u\) is either \(u^{+}\) or \(u^{-}\), and \(W_t^{\mathbb Q}\) is a Brownian motion under \(\mathbb Q\) with \(W_{0}=0\).

        \item[(2)] Suppose \((B,\mathbb Q^{+})\) and \((B,\mathbb Q^{-})\) correspond to the solutions of \eqref{eq:SDE} when \(u=u^{+}\) and \(u=u^{-}\), respectively, and \(E^{\mathbb Q}\) denotes expectation under \(\mathbb Q\). Then
        \[
            \phi_{g}^{+}(x)
            =
            E^{\mathbb Q^{+}}\left[
            E[V]1_{[N=0]}1_{[B_{1}\leq0]}
            +
            1_{[N\neq0]}\Psi^{+}\left(g\left(\frac{B_1}{N}\right)\right)
            \right],
        \]
        and
        \[
            \phi_{g}^{-}(x)
            =
            E^{\mathbb Q^{-}}\left[
            E[V]1_{[N=0]}1_{[B_{1}\geq0]}
            +
            1_{[N\neq0]}\Psi^{-}\left(g\left(\frac{B_1}{N}\right)\right)
            \right].
        \]

        \item[(3)] \(\phi_{g}^{+}(0)>\phi_{g}^{-}(0)\).

        \item[(4)] Suppose further that \(g\) is nondecreasing. Then \(\phi_{g}^{+}\) and \(\phi_{g}^{-}\) are nondecreasing. Consequently, \(\phi_g\) is nondecreasing. Moreover,
        \begin{align*}
            \phi_{g}^{+}(x)
            &\leq
            E^{\mathbb Q^{+}}\left[
            E[V]1_{[N=0]}1_{[\sigma W_{1}^{\mathbb Q^{+}}+x\leq0]}
            +
            1_{[N\neq0]}\Psi^{+}\left(
            g\left(
            \frac{\sigma W_{1}^{\mathbb Q^{+}}+x}{N}
            \right)
            \right)
            \right],\\
            \phi_{g}^{-}(x)
            &\geq
            E^{\mathbb Q^{-}}\left[
            E[V]1_{[N=0]}1_{[\sigma W_{1}^{\mathbb Q^{-}}+x\geq0]}
            +
            1_{[N\neq0]}\Psi^{-}\left(
            g\left(
            \frac{\sigma W_{1}^{\mathbb Q^{-}}+x}{N}
            \right)
            \right)
            \right].
        \end{align*}
    \end{itemize}
\end{lemma}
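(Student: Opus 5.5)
The approach is a Doob $h$-transform. I would construct $\mathbb Q$ on the path space of $B$ (with $N$ independent under $\mathcal P$) by a change of measure whose density is the terminal weight of the denominator. Take the $+$ case; the $-$ case is symmetric. Under $\mathcal P$, let $B_t=x+\sigma\beta_t$ with $N$ independent of $\beta$, and set
\[
G(B_1,N):=1_{[N=0]}1_{[B_1\le0]}+1_{[N\neq0]}\Pi^+\big(g(B_1/N)\big).
\]
Then $u^+(t,y)=E[G(B_1,N)\mid B_t=y]$ is the heat-equation solution with terminal condition $E_N[G(\cdot,N)]$. This solution exists and is unique among bounded solutions, since the terminal datum lies in $[0,1]$. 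For $t<1$ it is strictly positive and smooth, because the Gaussian kernel is strictly positive and the terminal datum is not a.e. zero: for instance $\Pi^+(g(\cdot))>0$ wherever $g<M$, and if that fails then the $N=0$ term or the $N\ge1$ terms still give positive mass. I would define
\[
\frac{d\mathbb Q^+}{d\mathcal P}:=\frac{G(B_1,N)}{u^+(0,x)}.
\]
Then $M_t=u^+(t,B_t)/u^+(0,x)$ is the density process on $\sigma(B_s,s\le t)$, and Itô gives $dM_t=M_t\,\sigma u_x/u\,d\beta_t$. Girsanov then yields \eqref{eq:SDE} with $W^{\mathbb Q}_t=\beta_t-\int_0^t\sigma u_x/u\,ds$. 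The only technical care is near $t=1$, where $u$ may vanish; I would handle this by localizing on $[0,1-\varepsilon]$ and noting that $M$ is a bounded martingale up to time $1$. This proves (1).

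For (2), write the numerator of $\phi_g^+(x)$ as in \eqref{eq:h^{*}(y)_equiv_pf}. On $\{N\neq0\}$ use $\Phi^+=\Psi^+\Pi^+$, and on $\{N=0\}$ use $E[V]1_{[B_1\le0]}$, so the numerator integrand is a weight times $G(B_1,N)$. Dividing by $E[G]=u^+(0,x)$ and recognising $G/u^+(0,x)$ as $d\mathbb Q^+/d\mathcal P$ gives the stated formula directly. This is pure bookkeeping.

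For (3), evaluate at $x=0$. By (2), $\phi^+_g(0)$ is a $\mathbb Q^+$-average of quantities that are $\ge E[V]$ or equal to $E[V]$, namely $\Psi^+(\cdot)\ge E[V]$ and the constant $E[V]$. Symmetrically, $\phi^-_g(0)$ is an average of quantities $\le E[V]$. Strictness follows because $\Psi^+(y)>E[V]$ whenever $y>m$ and $\Pi^+(y)>0$, and $g$ is continuous with values in $[m,M]$. This requires ruling out the degenerate case where $g$ sits on an endpoint; there the conventions $\Psi^+(M-)=M$ and $\Psi^-(m+)=m$ give the strict gap. I would also need the mild observation that $\mathbb Q^+$ charges $\{N\ge1\}$, which holds as long as $p(0)<1$.

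For (4), the main obstacle is monotonicity in $x$. My plan is to show that the drift $\sigma^2u_x/u=\sigma^2\partial_x\log u$ is nonincreasing in the starting point, so that comparison of SDEs applies. If $g$ is nondecreasing, then $\Pi^+(g(y/n))$ is nonincreasing in $y$ for every $n$, as is $1_{[y\le0]}$, so the terminal datum of $u^+$ is nonincreasing. Hence $u^+_x\le0$ and the drift is $\le0$. Comparison with the driftless process then gives $B_1\le x+\sigma W^{\mathbb Q^+}_1$ pathwise. Since the integrand in (2) is nondecreasing in $B_1$ ($\Psi^+$ and $g$ are nondecreasing, and $1_{[B_1\le0]}$ carries weight $E[V]\le\Psi^+$), this yields the stated upper bound.

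For monotonicity of $\phi^+_g$ in $x$, I would argue directly from \eqref{eq:h^{*}(y)_equiv_pf} instead. The result reduces to showing that $y\mapsto$ (conditional law of $B_1$ given $B_1$-weight $G$, started at $x$) is stochastically increasing in $x$. The reason is that Gaussian translation families have the monotone likelihood ratio property, and reweighting by a fixed nonnegative $G$ preserves MLR. Combined with the integrand being nondecreasing in $B_1$, and with the $N=0$ versus $N\ge1$ mixture handled jointly in $(B_1,N)$, this gives $\phi_g^+$ nondecreasing. Then (3) ensures that $\phi_g$ is nondecreasing across $0$.
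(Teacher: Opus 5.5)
Your parts (1) and (2) follow the paper's argument: a Doob transform by the normalized terminal weight, It\^o and Girsanov, then $\Phi^+=\Psi^+\Pi^+$. Your averaging proof of (3) is equivalent to the paper's reduction. One correction there: the endpoint conventions do not give a strict gap in the degenerate case. If $g\equiv m$, $V$ has no atom at $m$ and $p(0)>0$, then $\Pi^-(g)\equiv0$, so $\mathbb Q^-$ charges only $\{N=0\}$. Hence $\phi_g^-(0)=E[V]=\Psi^+(m)=\phi_g^+(0)$. Strictness needs $g$ to take some value in $(m,M)$, which the paper's proof also tacitly assumes.

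The genuine gap is the monotonicity claim in (4). Under the tilted law, $N$ is not independent of $B_1$: given $B_1=b$, $N$ has law proportional to $p(n)G(b,n)$, and this does not depend on $x$. Hence $\phi_g^+(x)=\int f(b)\,\nu_x(db)$, where $\nu_x(db)\propto q(\sigma,b-x)\,E_N[G(b,N)]\,db$ and
\[
f(b)=\frac{p(0)E[V]1_{[b\le0]}+\sum_{n\ge1}p(n)\Phi^+(g(b/n))}{p(0)1_{[b\le0]}+\sum_{n\ge1}p(n)\Pi^+(g(b/n))}.
\]
Your MLR step correctly shows that $\nu_x$ is stochastically increasing in $x$. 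You then need $f$ to be nondecreasing. Monotonicity of $b\mapsto\Psi^+(g(b/n))$ for each fixed $n$ does not give this, because the weights $p(n)\Pi^+(g(b/n))$ move with $b$. If $V$ has density $\pi_V$ and $y_n=g(b/n)$, then
\[
\frac{\partial f}{\partial y_n}=\frac{p(n)\pi_V(y_n)\,(f-y_n)}{E_N[G(b,N)]},
\]
which is negative as soon as $y_n>f$.

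Here is a concrete counterexample. Let $V\sim U(0,1)$, $\sigma=1$ and $P(N=1)=P(N=2)=\frac12$. Let $g=0.1$ on $(-\infty,100]$, linear from $0.1$ to $0.9$ on $[100,110]$, and $g=0.9$ on $[110,\infty)$. Then $\phi_g^+(150)=0.59$ up to an exponentially small error, while $\phi_g^+(107.5)\approx0.62$. So $\phi_g^+$ is not monotone even on $(0,\infty)$.

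The paper's SDE comparison argument has exactly the same hole. It uses monotonicity in $b$ for each fixed $n$, although $N$ and $B$ are dependent under $\mathbb Q^+$. The statement itself therefore fails in this generality and needs an extra hypothesis. Both arguments are fine when $N\equiv n\ge1$ is deterministic, where $f(b)=\Psi^+(g(b/n))$.

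The upper bound in (4) has a separate problem. The $N=0$ part of the integrand, $E[V]1_{[b\le0]}$, is nonincreasing in $b$ when $E[V]>0$. So $B_1\le x+\sigma W_1^{\mathbb Q^+}$ pushes it the wrong way. Your remark that it ``carries weight $E[V]\le\Psi^+$'' does not make the integrand monotone in $B_1$.

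For instance, take $g\equiv m$, no atom at $m$, $p(0)>0$ and $E[V]>0$. Then $\phi_g^+(x)=E[V]$. The drift is strictly negative, so the right-hand side equals $E[V]\bigl(1-\mathbb Q^+(N=0,\ x+\sigma W_1^{\mathbb Q^+}>0)\bigr)<E[V]$. The bound does hold if the $N=0$ term is written as $E[V]1_{[N=0]}$, which agrees $\mathbb Q^+$-a.s. with $E[V]1_{[N=0]}1_{[B_1\le0]}$. It also holds if $E[V]\le0$. The paper's appeal to ``monotonicity of the terminal payoff'' has the same flaw.
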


\begin{proof}
We give the proof for \(u=u^+\). The proof for \(u=u^-\) is analogous.

(1) For \(0\leq t<1\), the solution is given by
\[
\begin{aligned}
u^+(t,x)
&=
E_N\left[
1_{[N=0]}1_{[B_1\leq0]}
+
1_{[N\neq0]}\Pi^+\left(g\left(\frac{B_1}{N}\right)\right)
\,\bigg|\, B_t=x
\right]  \\
&=
\sum_{n\geq0}p(n)
\int_{-\infty}^{\infty}
\left\{
1_{[n=0]}1_{[y\leq0]}
+
1_{[n\neq0]}\Pi^+\left(g\left(\frac{y}{n}\right)\right)
\right\}
\frac{\exp\left\{-\frac{(y-x)^2}{2\sigma^2(1-t)}\right\}}
{\sqrt{2\pi\sigma^2(1-t)}}\,dy .
\end{aligned}
\]
Equivalently, after the change of variables \(y\mapsto y+x\),
\[
u^+(t,x)
=
\sum_{n\geq0}p(n)
\int_{-\infty}^{\infty}
\left\{
1_{[n=0]}1_{[y+x\leq0]}
+
1_{[n\neq0]}\Pi^+\left(g\left(\frac{y+x}{n}\right)\right)
\right\}
\frac{\exp\left\{-\frac{y^2}{2\sigma^2(1-t)}\right\}}
{\sqrt{2\pi\sigma^2(1-t)}}\,dy .
\]
Let \(B_t=x+\sigma\beta_t^{\mathbb P}\), where \(\beta^{\mathbb P}\) is a Brownian motion under \(\mathbb P\), and suppose that \(B\) and \(N\) are independent under \(\mathbb P\). Define \(\mathbb Q^+\) on the joint space of \((B,N)\) by
\[
    \frac{d\mathbb Q^+}{d\mathbb P}
    =
    \frac{
    1_{[N=0]}1_{[B_1\leq0]}
    +
    1_{[N\neq0]}\Pi^+\left(g\left(\frac{B_1}{N}\right)\right)
    }{
    E^{\mathbb P}\left[
    1_{[N=0]}1_{[B_1\leq0]}
    +
    1_{[N\neq0]}\Pi^+\left(g\left(\frac{B_1}{N}\right)\right)
    \right]
    } .
\]
The marginal density of \(B\) is \(u^+(1,B_1)/u^+(0,B_0)\). Hence \(u^+(t,B_t)/u^+(0,B_0)\) is a bounded positive martingale under the marginal law of \(B\). Since \(u^+\) solves the heat equation, Ito's formula gives \(du^+(t,B_t)=\sigma u_x^+(t,B_t)d\beta_t^{\mathbb P}\). Therefore, by Girsanov's theorem, under the \(\mathbb Q^+\)-marginal law of \(B\),
\[
    dB_t
    =
    \sigma dW_t^{\mathbb Q^+}
    +
    \sigma^2\frac{u_x^+(t,B_t)}{u^+(t,B_t)}\,dt,
    \qquad B_0=x .
\]
This proves the SDE in \eqref{eq:SDE}. The proof for \(u=u^-\) is analogous.

(2) From Definition \ref{def:phi_g}, with \(B_1=x+\sigma\beta_1^{\mathbb P}\),
\[
\phi_g^+(x)
=
\frac{
E^{\mathbb P}\left[
E[V]1_{[N=0]}1_{[B_1\leq0]}
+
1_{[N\neq0]}\Phi^+\left(g\left(\frac{B_1}{N}\right)\right)
\right]
}{
E^{\mathbb P}\left[
1_{[N=0]}1_{[B_1\leq0]}
+
1_{[N\neq0]}\Pi^+\left(g\left(\frac{B_1}{N}\right)\right)
\right]
}.
\]
Using \(\Phi^+=\Psi^+\Pi^+\), the numerator can be written as
\[
E^{\mathbb P}\left[
\left\{
E[V]1_{[N=0]}1_{[B_1\leq0]}
+
1_{[N\neq0]}\Psi^+\left(g\left(\frac{B_1}{N}\right)\right)
\right\}
\left\{
1_{[N=0]}1_{[B_1\leq0]}
+
1_{[N\neq0]}\Pi^+\left(g\left(\frac{B_1}{N}\right)\right)
\right\}
\right].
\]
Dividing by the normalizing constant in the definition of \(\mathbb Q^+\) gives
\[
    \phi_g^+(x)
    =
    E^{\mathbb Q^+}\left[
    E[V]1_{[N=0]}1_{[B_1\leq0]}
    +
    1_{[N\neq0]}\Psi^+\left(g\left(\frac{B_1}{N}\right)\right)
    \right].
\]
The proof for \(\phi_g^-\) is analogous.

(3) At \(x=0\), Definition \ref{def:phi_g} gives
\[
\phi_g^+(0)
=
\frac{
\frac{p(0)}{2}E[V]
+
\sum_{n\geq1}p(n)\int_{-\infty}^{\infty}
\Phi^+\left(g\left(\frac{z}{n}\right)\right)q(\sigma,z)\,dz
}{
\frac{p(0)}{2}
+
\sum_{n\geq1}p(n)\int_{-\infty}^{\infty}
\Pi^+\left(g\left(\frac{z}{n}\right)\right)q(\sigma,z)\,dz
}.
\]
The expression for \(\phi_g^-(0)\) is analogous, with \(\Phi^+\) and \(\Pi^+\) replaced by \(\Phi^-\) and \(\Pi^-\). Using \(\Phi^++\Phi^-=E[V]\) and \(\Pi^++\Pi^-=1\), the inequality \(\phi_g^+(0)>\phi_g^-(0)\) is equivalent to
\[
    \sum_{n\geq1}p(n)
    \int_{-\infty}^{\infty}
    \left\{
    \Psi^+\left(g\left(\frac{z}{n}\right)\right)-E[V]
    \right\}
    \Pi^+\left(g\left(\frac{z}{n}\right)\right)
    q(\sigma,z)\,dz
    >0 .
\]
This holds because \(\Psi^+(y)\geq E[V]\) for \(y\in[m,M)\), with strict inequality on a set of positive Gaussian measure whenever \(p(0)<1\) and \(V\) is nondegenerate.

(4) Suppose now that \(g\) is nondecreasing. From the representation of \(u^+\) above, the terminal function is nonincreasing in \(x\), and hence \(u_x^+(t,x)\leq0\). The drift \(\sigma^2u_x^+(t,x)/u^+(t,x)\) in the SDE is therefore nonpositive. By the standard comparison theorem for one dimensional SDEs, a larger initial condition leads to a stochastically larger solution. Since, for each \(n\geq1\), the map \(b\mapsto\Psi^+(g(b/n))\) is nondecreasing, the representation in part (2) implies that \(\phi_g^+\) is nondecreasing. The same argument applied to \(u^-\) shows that \(\phi_g^-\) is nondecreasing, and therefore \(\phi_g\) is nondecreasing.

Finally, since \(u_x^+\leq0\), the solution of the SDE for \(u=u^+\) is dominated by the Brownian motion with the same initial condition, namely \(x+\sigma W_1^{\mathbb Q^+}\). Using again the monotonicity of the terminal payoff gives
\[
\phi_g^+(x)
\leq
E^{\mathbb Q^+}\left[
E[V]1_{[N=0]}1_{[\sigma W_1^{\mathbb Q^+}+x\leq0]}
+
1_{[N\neq0]}\Psi^+\left(
g\left(\frac{\sigma W_1^{\mathbb Q^+}+x}{N}\right)
\right)
\right].
\]
The lower bound for \(\phi_g^-\) follows analogously.
This proves all four claims.
\end{proof}

We now apply the preceding monotonicity property to non-decreasing solutions of the fixed point equation. The next lemma shows that such a solution has the correct boundary behaviour and is strictly increasing.

\begin{lemma}\label{lemma:F_limits}
Let $F$ be a non-decreasing solution of \eqref{eq:F_FixedPointMapping}. Assume that $F$ takes values in $[m,M]$, where $m<M<\infty$. Then $F$ is strictly increasing, and
$\lim_{x\rightarrow\infty}F(x)=M$ and $\lim_{x\rightarrow-\infty}F(x)=m$.
\end{lemma}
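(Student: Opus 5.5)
Since \(F\) is non-decreasing and valued in \([m,M]\), the limits \(F(\pm\infty)\) exist. Write \(F(\infty)=:M'\le M\) and \(F(-\infty)=:m'\ge m\). The plan is to pass to the limit inside the fixed point equation \eqref{eq:F_FixedPointMapping} and use the structure of \(\phi_F\) to identify \(M'\) and \(m'\). Strict monotonicity then comes almost for free.

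\textbf{Strict monotonicity.} By Lemma~\ref{lemma:phi_property}(4), \(\phi_F\) is nondecreasing. By Lemma~\ref{lemma:phi_property}(3), it jumps at \(0\) with \(\phi_F(0+)\ge\phi_F^+(0)>\phi_F^-(0)\ge\phi_F(0-)\), so it is nonconstant. Each kernel in \eqref{eq:F_FixedPointMapping}, namely \(q(\sigma,z-nx)\) and \(\bar q(\sigma,n,x,z)=\int_0^1 q(\sigma,z-nxs)\,ds\), is a mixture of Gaussian translates. These translates move monotonically to the right as \(x\) increases, with the shift \(nx\) or \(nxs\) strictly increasing for \(s>0\). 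Because a Gaussian translate of a nonconstant nondecreasing function is strictly increasing, each integral is strictly increasing in \(x\), and so is their positive mixture over \(n\). This is the same reasoning the paper uses right after \eqref{eq:F(x)}. Assumption~\ref{assumption:lower_tail_finiteness} together with \(\phi_F\le M\) guarantees that all the integrals are finite.

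\textbf{Limits.} As \(x\to\infty\), every kernel pushes its mass to \(+\infty\): for \(q(\sigma,z-nx)\) this is immediate, and for \(\bar q\) the mass at \(s\in(0,1]\) sits near \(nxs\to\infty\). Monotone or dominated convergence then gives \(M'=\phi_F(\infty)\); the bound \(\phi_F\le M\) and the lower-tail assumption justify dominated convergence. The key step is therefore to show \(\phi_F^+(y)\to M\) as \(y\to\infty\).

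Suppose first that \(M'<M\). Then \(\Pi^+(F(\cdot))\ge \Pi^+(M')>0\) everywhere. In \(\phi_F^+(y)\), both the numerator and the denominator terms with \(N\neq0\) stay bounded away from zero, uniformly in \(y\). They converge, by monotone convergence in \(Z\), to \(\Phi^+(M')\) and \(\Pi^+(M')\), or to left-limit versions. The \(N=0\) terms \(P(Z\ge y)\) vanish. Hence \(\phi_F^+(\infty)=\Psi^+(M')\) (or \(\Psi^+(M'-)\)), and so \(M'=\Psi^+(M')\). But \(\Psi^+(y)=E[V\mid V>y]>y\) for every \(y<M\) in the support, which is a contradiction. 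Hence \(M'=M\). The case \(x\to-\infty\) is symmetric, using \(\phi_F^-\) and \(\Psi^-(y)<y\) for \(y>m\), which gives \(m'=m\).

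\textbf{Main obstacle.} The delicate point is making the limit \(\phi_F^+(y)\to\Psi^+(M')\) rigorous. This requires checking that \(P(N=0)P(Z\ge y)\) is negligible against the \(N\ge1\) denominator, which holds because that denominator is bounded below by \(P(N\ge1)\Pi^+(M')>0\); this uses \(p(0)<1\). It also requires handling atoms of \(V\) through the left limits noted in Definition~\ref{def:V_operators}, using \(\Psi^+(M'-)\ge\Psi^+(M')>M'\) in the atom case. Finally, in the limit of the fixed point equation one must confirm that the \(\bar q\)-part, whose mass near \(s=0\) does not escape, contributes negligibly. Its \(s\)-mass near zero is \(O(\varepsilon)\) uniformly, and the integrand there is bounded above by \(M\) and from below in integral by Assumption~\ref{assumption:lower_tail_finiteness}, so an \(\varepsilon\)-splitting argument closes the proof.
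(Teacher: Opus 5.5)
Your proposal is correct and follows the paper's proof: pass to the limit in \eqref{eq:F_FixedPointMapping}, show that $\lim_{y\to\infty}\phi_F^+(y)=\Psi^+(M'-)$ when $M'<M$, and contradict $\Psi^+(y-)>y$. The one difference is that you get nonconstancy of $\phi_F$ from the spread in Lemma~\ref{lemma:phi_property}(3) rather than from its tail limits, which lets you settle strict monotonicity before the limit step.

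There is one small slip. Since $\Psi^+$ is nondecreasing, $\Psi^+(M'-)=E[V\mid V\ge M']\le\Psi^+(M')$, not $\ge$. The inequality you actually need, $\Psi^+(M'-)>M'$, follows directly from $E[(V-M')1_{[V\ge M']}]=E[(V-M')1_{[V>M']}]>0$, as in the paper.
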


\begin{proof}
Since $F$ is non-decreasing and takes values in $[m,M]$, the limits
$L:=\lim_{x\rightarrow\infty}F(x)$ and
$\ell:=\lim_{x\rightarrow-\infty}F(x)$ exist, with $m\leq \ell\leq L\leq M$. Moreover, since $\phi_F^+(x)$ and $\phi_F^-(x)$ are conditional expectations of $V$, and $V$ takes values in $[m,M]$, we have $m\leq \phi_F(x)\leq M$ for all $x\neq0$. Hence $\phi_F$ is bounded.

We first prove that $L=M$. We use the following Gaussian shift argument: if $G$ is bounded and $G(u)\rightarrow G_\infty$ as $u\rightarrow\infty$, then, for each fixed $n\geq1$,
\[
    n\int_{-\infty}^{\infty}q(\sigma,s-nu)G(u)\,du
    =
    \int_{-\infty}^{\infty}q(\sigma,w)G\left(\frac{s-w}{n}\right)\,dw
    \longrightarrow G_\infty
\]
as $s\rightarrow\infty$, by dominated convergence.

Suppose, for contradiction, that $L<M$. From Definition \ref{def:phi_g},
\[
\phi_F^+(s)
=
\frac{
p(0)E[V]P(Z\geq s)
+
\sum_{n\geq1}p(n)n\int_{-\infty}^{\infty}q(\sigma,s-nu)\Phi^+(F(u))\,du
}{
p(0)P(Z\geq s)
+
\sum_{n\geq1}p(n)n\int_{-\infty}^{\infty}q(\sigma,s-nu)\Pi^+(F(u))\,du
}.
\]
Since $F(u)\rightarrow L$, we have $\Phi^+(F(u))\rightarrow\Phi^+(L-)$ and $\Pi^+(F(u))\rightarrow\Pi^+(L-)$. Moreover, $\Pi^+(L-)=P(V\geq L)>0$, $p(0)E[V]P(Z\geq s)\rightarrow0$ and $p(0)P(Z\geq s)\rightarrow0$ as $s\rightarrow\infty$. Hence, using the Gaussian shift argument and dominated convergence over $n$,
\[
    \lim_{s\rightarrow\infty}\phi_F^+(s)
    =
    \frac{\Phi^+(L-)}{\Pi^+(L-)}
    =
    \Psi^+(L-).
\]

Now take $x\rightarrow\infty$ in \eqref{eq:F_FixedPointMapping}. For the first kernel term,
\[
    \int_{-\infty}^{\infty}q(\sigma,z-nx)\phi_F(z)\,dz
    =
    \int_{-\infty}^{\infty}q(\sigma,w)\phi_F(w+nx)\,dw
    \longrightarrow \Psi^+(L-).
\]
For the second kernel term,
\[
\begin{aligned}
    \int_{-\infty}^{\infty}\bar q(\sigma,n,x,z)\phi_F(z)\,dz
    =
    \frac{1}{x}\int_0^x
    \int_{-\infty}^{\infty}q(\sigma,w)\phi_F(w+ny)\,dw\,dy 
    \longrightarrow \Psi^+(L-),
\end{aligned}
\]
because the inner integral converges to $\Psi^+(L-)$ as $y\rightarrow\infty$. Therefore, for each fixed $n\geq1$,
\[
\begin{aligned}
&\int_{-\infty}^{\infty}
\left\{
\frac{1}{n}q(\sigma,z-nx)
+
\frac{n-1}{n}\bar q(\sigma,n,x,z)
\right\}\phi_F(z)\,dz \\
&\qquad\longrightarrow
\frac{1}{n}\Psi^+(L-)+\frac{n-1}{n}\Psi^+(L-)
=
\Psi^+(L-).
\end{aligned}
\]
Since $\phi_F$ is bounded and $\sum_{n\geq1}p(n)/(1-p(0))=1$, dominated convergence over $n$ gives $L=\Psi^+(L-)$. But for every $y<M$, $\Psi^+(y-)=E[V\mid V\geq y]>y$, because $P(V>y)>0$. This contradicts $L<M$. Hence $L=M$.
Similarly for the lower limit that $\ell=m$. 


Finally, since $F$ is non-decreasing, Lemma \ref{lemma:phi_property}(4) implies that $\phi_F$ is non-decreasing. The preceding argument also gives the corresponding tail limits of $\phi_F$, so $\phi_F$ is not constant. Hence, by the convolution representation in \eqref{eq:F_FixedPointMapping} and the strict positivity of the Gaussian density, $F$ is strictly increasing.
\end{proof}

Lemma~\ref{lemma:F_limits} shows that the marginal trading cost $F$ increases strictly with trade size $x$. As the buy order becomes arbitrarily large, the marginal trading cost converges to $M$, the highest possible fundamental value of the asset. As the sell order becomes arbitrarily large, it converges to $m$, the lowest possible fundamental value.

\begin{theorem}\label{thm:equilibrium_existence}
Suppose \(-\infty<m<M<\infty\), \(p(0)<1\), and \(E[N]<\infty\). Then there exists an equilibrium.
\end{theorem}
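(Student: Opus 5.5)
We will prove Theorem~\ref{thm:equilibrium_existence} by a Schauder-type fixed point argument on a convex set of monotone functions. The fixed point is taken for the map \(T\) defined by the right-hand side of \eqref{eq:F_FixedPointMapping}, namely
\[
    T(g)(x):=\sum_{n\geq1}\frac{p(n)}{1-p(0)}\int_{-\infty}^{\infty}\left\{ \frac{1}{n}q(\sigma,z-nx)+\frac{n-1}{n}\bar{q}(\sigma,n,x,z) \right\}\phi_{g}(z)\,dz .
\]
The domain is the set \(\mathcal C\) of continuous, nondecreasing functions \(g:\mathbb R\to[m,M]\), with the topology of uniform convergence on compacts. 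This set is convex and closed.

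\textbf{Step 1: \(T\) maps \(\mathcal C\) into \(\mathcal C\).} Since \(\phi_g^\pm\) are conditional expectations of \(V\), we have \(m\le \phi_g\le M\). The kernels are probability densities in \(z\) for each \(x\): \(q(\sigma,\cdot-nx)\) is one, and \(\bar q(\sigma,n,x,\cdot)\) is an average of such densities. Hence \(T(g)\in[m,M]\). By Lemma~\ref{lemma:phi_property}(4), \(\phi_g\) is nondecreasing whenever \(g\) is. Both kernels are translates (or averages of translates \(z-ny\), \(y\in[0,x]\), after rescaling \(y=x s\)) of a Gaussian density, shifted in the direction of \(x\). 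Convolving a nondecreasing function with them therefore yields a nondecreasing function of \(x\); for \(\bar q\) this uses \(\frac1x\int_0^x\Gamma(ny)\,dy=\int_0^1\Gamma(nxs)\,ds\) with \(\Gamma\) nondecreasing. Continuity of \(T(g)\) follows from dominated convergence, since \(\phi_g\) is bounded.

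\textbf{Step 2: compactness.} The family \(T(\mathcal C)\) is uniformly bounded. We will show it is equicontinuous by bounding the derivative in \(x\). Differentiating the first kernel term gives \(\int n\,\partial_x q(\sigma,z-nx)\phi_g(z)\,dz\), which is at most \(n(M-m)\int|q'|\) in absolute value (subtract the constant \(m\) first). The \(\bar q\) term, written as \(\int_0^1\!\int q(\sigma,w)\phi_g(w+nxs)\,dw\,ds\), has derivative bounded similarly by \(C n (M-m)\). Weighting by \(p(n)/(1-p(0))\) gives a Lipschitz constant \(C(M-m)E[N]/(1-p(0))<\infty\). This is exactly where \(E[N]<\infty\) and \(p(0)<1\) enter. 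By Arzel\`a--Ascoli, \(T(\mathcal C)\) is relatively compact in the topology of local uniform convergence, which is a locally convex Fréchet topology. Thus the Schauder--Tychonoff theorem applies, once continuity of \(T\) is established.

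\textbf{Step 3: continuity of \(T\), the main obstacle.} Suppose \(g_k\to g\) locally uniformly in \(\mathcal C\). We need \(\phi_{g_k}\to\phi_g\) almost everywhere, after which \(T(g_k)\to T(g)\) by dominated convergence. The difficulty is that \(\Phi^\pm,\Pi^\pm\) may jump when \(V\) has atoms, and the denominator of \(\phi_g\) may be small. To handle this, we first note that for \(x\neq0\) the denominator in Definition~\ref{def:phi_g} is bounded below. For \(x>0\) it is at least \(p(0)P(Z\ge x)+\sum_{n\ge1}p(n)P(V\ge \cdot)\)-type terms. Since \(g\le M\) and \(\Pi^+(g(u))\) is positive wherever \(g(u)<M\), we would handle the possible degeneracy when \(g\equiv M\) on a half-line by restricting \(\mathcal C\). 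If needed, we replace \(\mathcal C\) by functions additionally satisfying \(g(u)\le M-\delta(u)\) for a suitable fixed envelope; this is preserved by \(T\) via the upper bound of Lemma~\ref{lemma:phi_property}(4). For the numerator, \(\Phi^+(g_k(u))\to\Phi^+(g(u))\) at every \(u\) where \(g(u)\) is not an atom of \(V\). For monotone continuous \(g\), the preimage of the countable atom set is a countable union of intervals on which \(g\) is constant. On such an interval we use the convention \(\Phi^+(x-)=\Phi^+(x)\) a.e. from Definition~\ref{def:V_operators}. We then approximate: first prove existence when \(V\) has a continuous distribution, and then pass to general \(V\) on \([m,M]\) by approximating \(V\) in law and taking limits of the resulting equilibria \(F\). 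Compactness of these equilibria again comes from Step 2's Lipschitz bound. Lemma~\ref{lemma:F_limits} guarantees that the limit is strictly increasing with the correct endpoints, so that \(x^*=F^{-1}(V)\) is well defined. Having obtained a fixed point \(F\in\mathcal C\), Lemma~\ref{lemma:F_limits} and the earlier equivalence theorem give the equilibrium \((F^{-1}(V),\phi_F)\).
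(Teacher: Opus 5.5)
\textbf{Overall.} Your Steps 1--2 are the paper's argument in a different ambient space. You use $C(\mathbb R)$ with local uniform convergence and Schauder--Tychonoff, whereas the paper builds the Lipschitz constant $K_0$ into $D\subset L^2(\mathbb R,\mu_0)$; that difference is immaterial. The gap is Step 3, which you leave open.

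\textbf{Your diagnosis is right.} It concerns a point the paper's proof passes over by simply asserting $\phi_{g_j^0}\to\phi_{g_0}$ a.e. Suppose $g\equiv a$ on an interval $I$ with $P(V=a)>0$, and $g_k\uparrow a$ on $I$. Then $\Pi^+(g_k(u))\to P(V\ge a)\neq P(V>a)$ for every $u\in I$, a set of positive Lebesgue measure, so $T$ can fail to be continuous at such $g\in\mathcal C$. The convention $\Phi^+(x-)=\Phi^+(x)$ for a.e.\ level $x$ does not help, because here a single level is attained on a set of positive measure in $u$. For atomless $V$, and away from the degenerate constants discussed below, plain dominated convergence suffices and your proof coincides with the paper's.

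\textbf{Your fallback does not close the gap.} You solve for atomless $V_k\to V$, extract a locally uniform limit $F$ of the fixed points $F_k$, and then appeal to Lemma~\ref{lemma:F_limits} for strict monotonicity of $F$. That lemma presupposes that $F$ already solves the fixed-point equation for $V$. Proving this requires $\Phi_k^+(F_k(u))\to\Phi^+(F(u))$ for a.e.\ $u$, which is exactly what fails if $F$ is flat at an atom level. Strict monotonicity and the endpoint limits are not preserved under local uniform limits, so this step is circular.

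\textbf{What is missing} is a quantitative strict monotonicity estimate, together with a correctly justified envelope.

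The envelope does not follow from the upper bound in Lemma~\ref{lemma:phi_property}(4), which is expressed through $\Psi^+(g)\ge g$. It follows from $\Psi^-\le E[V]\le\Psi^+$. This gives $\phi_g\le E[V]$ on $(-\infty,0)$ and $\phi_g\ge E[V]$ on $(0,\infty)$, hence $m+(E[V]-m)\kappa_+(x)\le Tg(x)\le M-(M-E[V])\kappa_-(x)$, where $\kappa_\pm(x)>0$ is the mass the kernel at $x$ puts on $\{\pm z>0\}$. This also settles the denominators. When $p(0)=0$ they vanish only at the constant $g\equiv M$ on the ask side, or at $g\equiv m$ with $P(V=m)=0$ on the bid side. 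They do not vanish for $g\equiv M$ on a half-line, since $(x-Z)/n$ has full support.

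For the monotonicity estimate, note that $\phi_g$ is nondecreasing with a jump $s(g):=\phi_g^+(0)-\phi_g^-(0)$ at the origin. Hence $Tg(y)-Tg(x)\ge s(g)\int_x^y\sum_{n\ge1}\frac{p(n)}{1-p(0)}q(\sigma,nt)\,dt$ for $x<y$. Writing $\phi_g^\pm(0)$ as in the proof of Lemma~\ref{lemma:phi_property}(3) shows that $s(g)$ equals $\sum_{n\ge1}p(n)\int E[(V-E[V])1_{[V>g(z/n)]}]\,q(\sigma,z)\,dz$ divided by $b(1-b)\le 1/4$, where $b\in(0,1)$ is the ask-side denominator at $0$. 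The map $y\mapsto E[(V-E[V])1_{[V>y]}]$ is bounded away from zero on compact subintervals of $(m,M)$, so $s(g)$ is bounded below uniformly over $g$ in the envelope.

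\textbf{How to use this.} Add the envelope and this increment bound to the definition of the domain. Both conditions are convex, closed and preserved by $T$. Every $g$ in the domain is then strictly increasing, so the preimage under $g$ of the countable atom set is countable. Your dominated-convergence argument then gives continuity of $T$ directly. Alternatively, the same bounds used uniformly in $k$ would make your approximation argument work.
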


\begin{proof}
We prove the result by applying Schauder's fixed point theorem. For \(x\neq0\), observe that \(\bar q(\sigma,n,x,z)=x^{-1}\int_0^x q(\sigma,z-ny)\,dy\). Hence
\[
\begin{aligned}
    \frac{\partial}{\partial x}\bar q(\sigma,n,x,z)
    &=
    \frac{q(\sigma,z-nx)-\bar q(\sigma,n,x,z)}{x}  \\
    &=
    -\frac{n}{x^2}\int_0^x y q_y(\sigma,z-ny)\,dy .
\end{aligned}
\]
Moreover, \(|\partial q(\sigma,z-nx)/\partial x|=n|q_y(\sigma,z-nx)|\). Since \(V\) takes values in \([m,M]\), the function \(\phi_g\) also takes values in \([m,M]\), and hence \(|\phi_g|\leq |m|+M\). Using \(\int_{-\infty}^{\infty}|q_y(\sigma,z-a)|\,dz=\sigma^{-1}\sqrt{2/\pi}\), uniformly in \(a\), the preceding identities give
\[
\begin{aligned}
\left|\frac{d}{dx}Tg(x)\right|
&\leq
(|m|+M)\frac{1}{\sigma}\sqrt{\frac{2}{\pi}}
\sum_{n\geq1}\frac{p(n)}{1-p(0)}
\left(1+\frac{n-1}{2}\right)  \\
&=:K_0<\infty .
\end{aligned}
\]
The finiteness follows from \(E[N]<\infty\).

Let \(\mathcal X:=L^2(\mathbb R,\mu_0)\), where \(\mu_0(dx)=(2\pi)^{-1/2}e^{-x^2/2}\,dx\). Define
\[
    D_0
    :=
    \left\{
    g:\mathbb R\mapsto[m,M]:
    g \text{ is nondecreasing and }
    |g(x)-g(y)|\leq K_0|x-y|,\ \text{for all }x,y\in\mathbb R
    \right\},
\]
and
\[
    D
    :=
    \left\{
    g\in\mathcal X:
    g=g_0,\ \mu_0\text{-a.e. for some }g_0\in D_0
    \right\}.
\]
Then \(D\) is nonempty and convex.

For \(g\in D\), let \(g_0\in D_0\) be its continuous representative and define
\[
    Tg(x)
    :=
    \sum_{n\geq1}\frac{p(n)}{1-p(0)}
    \int_{-\infty}^{\infty}
    \left\{
    \frac{1}{n}q(\sigma,z-nx)
    +
    \frac{n-1}{n}\bar q(\sigma,n,x,z)
    \right\}
    \phi_{g_0}(z)\,dz .
\]
For each fixed \(n\) and \(x\), the measure
\[
    \left\{
    \frac{1}{n}q(\sigma,z-nx)
    +
    \frac{n-1}{n}\bar q(\sigma,n,x,z)
    \right\}\,dz
\]
has total mass one. Since \(\phi_{g_0}\) takes values in \([m,M]\), \(Tg\) also takes values in \([m,M]\). The derivative estimate above gives \(|Tg(x)-Tg(y)|\leq K_0|x-y|\) for all \(x,y\in\mathbb R\).

Moreover, since \(g_0\) is nondecreasing, Lemma~\ref{lemma:phi_property} implies that \(\phi_{g_0}\) is nondecreasing. For each \(n\geq1\), define \(A_n(x):=\int_{-\infty}^{\infty}q(\sigma,z-nx)\phi_{g_0}(z)\,dz\). Then \(A_n\) is nondecreasing in \(x\). Also, for \(x\neq0\),
\[
    \int_{-\infty}^{\infty}\bar q(\sigma,n,x,z)\phi_{g_0}(z)\,dz
    =
    \frac{1}{x}\int_0^x A_n(y)\,dy,
\]
with the value at \(x=0\) defined by continuity. Since \(A_n\) is nondecreasing, the map \(x\mapsto x^{-1}\int_0^x A_n(y)\,dy\) is also nondecreasing. Hence \(Tg\) is nondecreasing. Therefore \(Tg\in D_0\), and consequently \(T(D)\subset D\).

We next show that \(D\) is compact in \(\mathcal X\). Let \((g_j)\subset D\), and choose representatives \(g_j^0\in D_0\). The family \((g_j^0)\) is uniformly bounded and equicontinuous. By the Arzela Ascoli theorem on compact intervals and a diagonal argument, there exists a subsequence, still denoted by \((g_j^0)\), which converges uniformly on compact subsets of \(\mathbb R\) to some function \(g^0\). The limit satisfies \(g^0\in D_0\). Since the sequence is uniformly bounded and \(\mu_0\) is a probability measure, dominated convergence gives \(g_j^0\rightarrow g^0\) in \(L^2(\mathbb R,\mu_0)\). Hence \(D\) is compact.

It remains to show that \(T:D\mapsto D\) is continuous. Suppose \(g_j\rightarrow g\) in \(D\). By compactness of \(D\), every subsequence of \((g_j)\) has a further subsequence whose continuous representatives converge locally uniformly to a representative of \(g\). Along such a subsequence, \(\phi_{g_j^0}(z)\rightarrow\phi_{g_0}(z)\) for almost every \(z\). Since \(\phi_{g_j^0}\) is uniformly bounded by \(|m|+M\), dominated convergence gives \(Tg_j(x)\rightarrow Tg(x)\) for every \(x\in\mathbb R\). Another application of dominated convergence with respect to \(\mu_0\) gives \(Tg_j\rightarrow Tg\) in \(L^2(\mathbb R,\mu_0)\). Therefore \(T\) is continuous.

Thus \(D\) is a nonempty compact convex subset of \(\mathcal X\), and \(T:D\mapsto D\) is continuous. By Schauder's fixed point theorem, there exists \(F\in D\) such that \(F=TF\). Hence \eqref{eq:F_FixedPointMapping} admits a nondecreasing solution.

By Lemma~\ref{lemma:F_limits}, this solution is strictly increasing and satisfies \(\lim_{x\rightarrow-\infty}F(x)=m\) and \(\lim_{x\rightarrow\infty}F(x)=M\). In particular, \(F(x)\in(m,M)\) for every finite \(x\). Therefore \(F^{-1}(V)\) is well defined, and by the equilibrium characterization above, the corresponding pair \((F^{-1}(V),\phi_F)\) constitutes an equilibrium.
\end{proof}

\begin{remark}
Theorem~\ref{thm:equilibrium_existence} establishes equilibrium existence under the bounded-support assumption on $V$. This assumption is imposed for the fixed point argument above. In Section~\ref{section:Numerics}, we also consider unbounded asset value distributions. The numerical evidence suggests that the fixed point equation continues to admit a unique solution, and hence an equilibrium, beyond the bounded-support case.
\end{remark}

\section{Price Impact Asymptotics}\label{section:PriceAsymptotics}

The shape of price impact as a function of trading volume plays an important role in trading strategies and market liquidity. In this section, we study the asymptotic shape of price impact implied by the equilibrium model.

\begin{definition}
    A function $g:(0,\infty)\rightarrow(0,\infty)$ is said to be \textit{regularly varying} of index $\rho$ at $\infty$ if
    \[
        \lim_{\lambda\rightarrow\infty}\frac{g(\lambda x)}{g(\lambda)}=x^{\rho},
        \qquad x>0.
    \]
    Similarly, a function $g:(-\infty,0)\rightarrow(0,\infty)$ is said to be regularly varying of index $\rho$ at $-\infty$ if $g(-x)$ is regularly varying of index $\rho$ at $\infty$. When $\rho=0$, that is, when $\lim_{\lambda\rightarrow\infty}g(\lambda x)/g(\lambda)=1$ for all $x>0$, $g$ is said to be \textit{slowly varying}.
\end{definition}

\begin{remark}
    Roughly speaking, a regularly varying function behaves asymptotically like a power function. Slowly varying functions include, for example, $\log x$, $(\log x)^{\alpha}$ for $\alpha\in\mathbb{R}$, $\log\log x$, and $\exp\{(\log x)^{\alpha}\}$ for $\alpha\in(0,1)$.
\end{remark}

\begin{assumption}\label{assumption:Psi}
    $\Pi^{+}$ and $\Pi^{-}$ have continuous derivatives on $(m,M)$. Moreover, the limits
    \[
        \Psi_{x}^{+}(M):=\lim_{x\rightarrow M}\frac{d}{dx}\Psi^{+}(x),
        \qquad
        \Psi_{x}^{-}(m):=\lim_{x\rightarrow m}\frac{d}{dx}\Psi^{-}(x)
    \]
    exist and coincide with the left and right endpoint derivatives, respectively. In addition, $\Psi_{x}^{+}(M)\Psi_{x}^{-}(m)\neq0$.
\end{assumption}

Note that $\Psi_x^+(M)$ and $\Psi_x^-(m)$ describe the tail shape of the trading asset distribution near the endpoints. Specifically, taking the right tail as an example, suppose that $\Pi^+(x)=P(V>x)\propto (M-x)^\alpha$ as $x\rightarrow M$, where $\alpha>0$. Then $\Pi_x^+(x)/\Pi^+(x)\sim-\alpha(M-x)^{-1}$. Moreover, by integration by parts, $\Psi^+(x)-x=\int_x^M \Pi^+(y)\,dy/\Pi^+(x)$. Hence
\[
    \frac{M-\Psi^+(x)}{M-x}
    =
    1-
    \frac{\int_x^M \Pi^+(y)\,dy}{(M-x)\Pi^+(x)}.
\]
Applying L'Hopital's rule as $x\rightarrow M$ gives $\Psi_x^+(M)=\alpha/(1+\alpha)<1$. This shows that $\Psi_x^+(M)<1$ captures a power type upper endpoint tail.

In the following theorem, we show that if the distribution of the trading asset has power type endpoint tails, then the equilibrium price impact follows a power law, with the exponent characterised by a fixed point equation.

\begin{theorem}\label{thm:powerlaw_PriceImpact}
Assume Assumption~\ref{assumption:Psi} holds. Suppose $N$ has probability mass function $p(n)$, $n=0,1,2,\dots$, with $p(1)>0$, and suppose $V$ is supported on a bounded interval $(m,M)\subset\mathbb{R}$. Let $F$ be a strictly increasing solution to \eqref{eq:F_FixedPointMapping}. Suppose $0<\Psi_x^+(M)<1$ and $0<\Psi_x^-(m)<1$. Define
\[
    k_+:=\frac{\Psi_x^+(M)}{1-\Psi_x^+(M)},
    \qquad
    k_-:=\frac{\Psi_x^-(m)}{1-\Psi_x^-(m)},
\]
and assume $E\left[N^{1+(k_+\vee k_-)}\right]<\infty$. Then the following hold:
    
\begin{enumerate}[label=(\arabic*)]
    \item $M-F$ is regularly varying at $\infty$ with index $\rho^+\in(-1,0)$, where $\rho^+$ satisfies $\mathcal{T}^+(\rho^+)=\rho^+$, with
    \begin{align}\label{eq:PowerLaw_fixedpoint_plus}
        \mathcal{T}^+(x)
        :=
        \frac{\Psi_x^+(M)}{1-p(0)}
        \frac{
        E\left[1_{[N\neq0]}N^{-(k_++1)x}\right]
        }{
        E\left[1_{[N\neq0]}N^{-k_+x}\right]
        }
        E\left[
        1_{[N\neq0]}\left(xN^{x-1}+N^x\right)
        \right]
        -1.
    \end{align}
    Moreover, $\Pi^+(F)$ is regularly varying at $\infty$ with index $k_+\rho^+$.

    \item $F-m$ is regularly varying at $-\infty$ with index $\rho^-\in(-1,0)$, where $\rho^-$ satisfies $\mathcal{T}^-(\rho^-)=\rho^-$, with
    \begin{align}\label{eq:PowerLaw_fixedpoint_minus}
        \mathcal{T}^-(x)
        :=
        \frac{\Psi_x^-(m)}{1-p(0)}
        \frac{
        E\left[1_{[N\neq0]}N^{-(k_-+1)x}\right]
        }{
        E\left[1_{[N\neq0]}N^{-k_-x}\right]
        }
        E\left[
        1_{[N\neq0]}\left(xN^{x-1}+N^x\right)
        \right]
        -1.
    \end{align}
    Moreover, $\Pi^-(F)$ is regularly varying at $-\infty$ with index $k_-\rho^-$.
\end{enumerate}
\end{theorem}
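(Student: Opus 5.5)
We will prove part (1); part (2) then follows by the symmetry \(V\mapsto -V\), which interchanges \((\Pi^+,\Psi^+,M)\) with \((\Pi^-,\Psi^-,m)\). Write \(\varepsilon(x):=M-F(x)\). By Lemma~\ref{lemma:F_limits}, \(\varepsilon\) is positive, strictly decreasing and tends to \(0\). The plan is to obtain two asymptotic relations as \(x\to\infty\), one for \(\phi_F^+\) and one for \(F=T F\), and to close them into a fixed point equation for the index.

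\emph{Step 1 (the price in terms of \(F\)).} By Assumption~\ref{assumption:Psi} and the computation preceding the theorem, \(\Pi^+(y)\) behaves like \(c\,(M-y)^{k_+}\) up to a slowly varying factor as \(y\to M\). Indeed, \(\Psi^+_x(M)=k_+/(1+k_+)\), and \(\Pi^+_x/\Pi^+\sim -k_+/(M-y)\) follows by inverting the relation \(\Psi^+(y)-y=\int_y^M\Pi^+/\Pi^+(y)\). In addition, \(M-\Psi^+(y)\sim(1-\Psi^+_x(M))(M-y)=(M-y)/(1+k_+)\). In the ratio formula for \(\phi_F^+(s)\) from the proof of Lemma~\ref{lemma:F_limits}, write the numerator as \(M\cdot\)denominator minus \(\sum p(n)n\int q(\sigma,s-nu)(M\Pi^+-\Phi^+)(F(u))\,du\), and note that \(M\Pi^+-\Phi^+=\Pi^+\,(M-\Psi^+)\). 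The \(N=0\) terms are Gaussian small and therefore negligible against regularly varying quantities. The Gaussian shift argument then gives
\[
M-\phi_F^+(s)\sim \frac{\sum_n p(n)\,\Pi^+(F(s/n))\,(M-\Psi^+(F(s/n)))}{\sum_n p(n)\,\Pi^+(F(s/n))}.
\]
This step requires a preliminary result: a priori regular variation, or at least bounds of the form \(\varepsilon(\lambda x)/\varepsilon(\lambda)\) bounded above and below for \(x\) in compacts, so that the Gaussian convolution and \(u\mapsto u/n\) can be handled. Under the hypothesis \(\varepsilon\in RV_\rho\) we get \(\Pi^+(F(s/n))\approx n^{-k_+\rho}\Pi^+(F(s))\) and \(M-\Psi^+(F(s/n))\approx \Psi^+_x(M)\,n^{-\rho}\varepsilon(s)\) (the constant being tracked from \(\Psi^+\)'s derivative). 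Uniform summation over \(n\) uses Potter bounds together with \(E[N^{1+k_+}]<\infty\). The result is
\[
M-\phi_F^+(s)\sim \Psi^+_x(M)\,\varepsilon(s)\,\frac{E[1_{[N\neq0]}N^{-(k_++1)\rho}]}{E[1_{[N\neq0]}N^{-k_+\rho}]}.
\]

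\emph{Step 2 (the fixed point for \(F\)).} Insert this into \eqref{eq:F_FixedPointMapping}. The first kernel gives \(M-\phi_F^+\) evaluated near \(nx\), that is, \(n^{\rho}(\cdot)\,\varepsilon(x)\). The averaged kernel gives \(\frac1x\int_0^x (M-\phi^+_F)(ny)\,dy\sim n^\rho\varepsilon(x)/(1+\rho)\) by Karamata's theorem, which is valid because \(\rho>-1\). Combining the factor \(\frac1n n^\rho+\frac{n-1}{n}\frac{n^\rho}{1+\rho}\), normalising, and equating the coefficient of \(\varepsilon(x)\) on both sides yields, after rearrangement, exactly \(\mathcal T^+(\rho)=\rho\). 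I would verify this rearrangement carefully; it produces the \(xN^{x-1}+N^x\) expression. The statement that \(\Pi^+(F)\in RV_{k_+\rho}\) then follows by composition, since \(\Pi^+\) is regularly varying at \(M\) with index \(k_+\).

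\emph{Main obstacle: establishing regular variation in the first place.} The steps above only identify the index once regular variation is known. For that I would set \(r(\lambda,x)=\varepsilon(\lambda x)/\varepsilon(\lambda)\) and show first that \(\varepsilon\) is \(O\)-regularly varying, using monotonicity and the fact that the relation of Step 2 takes the form \(\varepsilon(x)\asymp\int\varepsilon(ny)\,\kappa_n(dy)\). Next I would take subsequential limits \(r(\lambda_j,\cdot)\to\psi\) via Helly selection, in the spirit of Step~1 of a Tauberian argument. Passing to the limit in the linearised equation shows that \(\psi\) satisfies a multiplicative Cauchy-type equation, which forces \(\psi(x)=x^{\rho}\) with \(\rho\) solving \(\mathcal T^+(\rho)=\rho\). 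Finally, I would show that \(\mathcal T^+(x)-x\) has a unique root in \((-1,0)\) by checking its signs at the endpoints and its monotonicity; for instance, at \(x=0\) one gets \(\mathcal T^+(0)=\Psi_x^+(M)-1<0\). Uniqueness of the root then implies that every subsequential limit coincides, which gives regular variation. The hard technical point is the uniform domination in \(n\) when passing to the limit, and this is where the moment condition \(E[N^{1+(k_+\vee k_-)}]<\infty\) is used.
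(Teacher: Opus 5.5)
Your outline follows the paper's proof. It uses rescaled ratios $\varepsilon(\lambda x)/\varepsilon(\lambda)$, subsequential limits, the limit of $(M-\phi_F^+(\lambda z))/\varepsilon(\lambda)$, a limiting equation whose positive solutions are powers, and identification of the index; your Step~2 algebra does yield $\mathcal T^+(\rho)=\rho$. (One slip in Step~1: $M-\Psi^+(y)\sim\Psi_x^+(M)(M-y)$, whereas $(1-\Psi_x^+(M))(M-y)$ is the asymptotic of $\Psi^+(y)-y$; your later formula uses the right constant.)

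\textbf{Closing step.} The genuine gap is here. To get regular variation you need every subsequential limit to have the same exponent. You propose to obtain this from uniqueness of the zero of $g(x):=\mathcal T^+(x)-x$ in $(-1,0)$, via endpoint signs and monotonicity. But $g$ is not monotone in general. Take $\Psi_x^+(M)=1/2$ (so $k_+=1$) and let $N$, conditionally on $\{N\geq1\}$, be uniform on $\{1,100\}$; then $g(-1)\approx0.25$, $g(-0.9)\approx1.70$ and $g(0-)=-1/2$. The paper says explicitly that uniqueness of this root is not established in general, which is why Proposition~\ref{prop:tail_index_price_impact_exponent} takes it as a hypothesis. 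The endpoint check is also weaker than you suggest. One has $g(-1)=\frac{\Psi_x^+(M)}{1-p(0)}\frac{E[1_{[N\neq0]}N^{k_++1}]}{E[1_{[N\neq0]}N^{k_+}]}E[1_{[N\neq0]}(N-1)N^{-2}]$, which vanishes when $N\leq1$ a.s. In that case the limiting equation collapses to $\gamma=\Psi_x^+(M)\gamma$, which has no positive solution. So both an interior root and the theorem itself require $P(N\geq2)>0$.

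\textbf{Power-function step.} Your claim that every subsequential limit is a power rests on a mechanism that does not apply. A subsequential limit $\gamma$ satisfies the nonlinear equation \eqref{eq:gamma_limiting_equation}, where $H_\gamma$ is a ratio of $p$-averages of $\gamma(z/m)^{k_++1}$ and $\gamma(z/m)^{k_+}$. There is nothing to linearise until you know $\gamma$ is a power. Shift invariance of the limit set gives $\gamma(xy)=\gamma(x)\gamma(y)$ only if the limit is unique, which is circular. The paper obtains this step from the comparison argument of Theorem~B.1 in \cite{Cetin_Waelbroeck_2023}, with the deterministic insider count replaced by an average over $N$; you need that argument or an equivalent one.

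\textbf{Possible repair.} The single-exponent issue is also passed over in the paper's proof, and it can be handled without root uniqueness. Given your O-regular-variation bounds, $\lambda\mapsto\varepsilon(\lambda\,\cdot)/\varepsilon(\lambda)$ is a continuous path in a compact metrizable space of monotone functions, with the topology of local $L^1$ convergence on $(0,\infty)$. Its cluster set at infinity is therefore connected, so the exponents that occur form an interval of zeros of $g$. Under $E[N^{1+k_+}]<\infty$, $g$ is real-analytic on $(-1,0)$ and $g(0-)\neq0$. Hence its zeros are isolated, and this interval is a single point.
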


\begin{remark}
The moment condition $E\left[N^{1+(k_+\vee k_-)}\right]<\infty$ is used only to justify the interchange of limits and summation over $n$ in the asymptotic argument. It is automatically satisfied when $N$ has finite support, and also holds for standard count distributions with exponentially decaying tails, such as the Poisson, geometric, and negative binomial distributions. The condition mainly rules out very heavy-tailed specifications for the number of insiders.
\end{remark}

\begin{proof}
We prove the first statement. The proof of the second statement is analogous. Let $G(x):=M-F(x)$ for $x>0$. By Lemma~\ref{lemma:F_limits}, $G(x)>0$ and $G(x)\rightarrow0$ as $x\rightarrow\infty$. For $\alpha>0$, define
\[
    \gamma_\alpha(x):=\frac{G(\alpha x)}{G(\alpha)},\qquad x>0.
\]
From the fixed point equation~\eqref{eq:F_FixedPointMapping}, after the change of variables $z\rightarrow\alpha z$,
\[
\begin{aligned}
\gamma_\alpha(x)
&=
\sum_{n\geq1}\frac{p(n)}{1-p(0)}
\int_{-\infty}^{\infty}
\left\{
\frac{1}{n}q\left(\frac{\sigma}{\alpha},z-nx\right)
+
\frac{n-1}{n}\bar q\left(\frac{\sigma}{\alpha},n,x,z\right)
\right\}
\frac{M-\phi_F(\alpha z)}{G(\alpha)}\,dz .
\end{aligned}
\]
The moment condition $E[N^{1+(k_+\vee k_-)}]<\infty$ provides the domination needed to interchange limits and summation over $n$.

We first identify the limiting form of $M-\phi_F^+$. For $y>0$, the mean value theorem gives
\[
    M-\Psi^+(F(\alpha y))
    =
    \Psi_x^+(\xi_{\alpha,y})G(\alpha y),
    \qquad
    \xi_{\alpha,y}\in(F(\alpha y),M).
\]
Thus, along any subsequence such that $\gamma_\alpha(y)\rightarrow\gamma(y)$,
\begin{align}\label{eq:Psi_scaling_limit}
    \frac{M-\Psi^+(F(\alpha y))}{G(\alpha)}
    \rightarrow
    \Psi_x^+(M)\gamma(y).
\end{align}
Moreover,
\[
    \Psi^+(x)-x
    =
    \frac{\int_x^M \Pi^+(u)\,du}{\Pi^+(x)},
    \qquad
    -\frac{\Pi_x^+(x)}{\Pi^+(x)}
    =
    \frac{\Psi_x^+(x)}{\Psi^+(x)-x}.
\]
Since $\Psi_x^+(M)\in(0,1)$, the local endpoint tail exponent is
\[
    k_+ = \frac{\Psi_x^+(M)}{1-\Psi_x^+(M)}.
\]
Consequently, along the same subsequence,
\[
    \frac{\Pi^+(F(\alpha y))}{\Pi^+(F(\alpha))}
    \rightarrow
    \gamma(y)^{k_+},
    \qquad y>0.
\]

Using Definition~\ref{def:phi_g}, changing variables $y=\alpha r$, and using the concentration of $q(\sigma/(\alpha n),r-z/n)\,dr$ at $z/n$, we obtain, for $z>0$,
\begin{align}\label{eq:H_gamma_limit}
\frac{M-\phi_F^+(\alpha z)}{G(\alpha)}
\rightarrow
H_\gamma(z)
:=
\Psi_x^+(M)
\frac{
\sum_{m\geq1}p(m)\gamma(z/m)^{k_++1}
}{
\sum_{m\geq1}p(m)\gamma(z/m)^{k_+}
}.
\end{align}
Substituting \eqref{eq:H_gamma_limit} into the scaled fixed point equation shows that every subsequential scaling limit $\gamma$ satisfies
\begin{align}\label{eq:gamma_limiting_equation}
\gamma(x)
&=
\sum_{n\geq1}\frac{p(n)}{1-p(0)}
\left\{
\frac{1}{n}H_\gamma(nx)
+
\frac{n-1}{nx}\int_0^x H_\gamma(ny)\,dy
\right\},
\qquad x>0.
\end{align}

The comparison method used in Theorem~B.1 of \cite{Cetin_Waelbroeck_2023} applies to the limiting equation \eqref{eq:gamma_limiting_equation}. In the present setting, the deterministic number of insiders is replaced by averaging with respect to the distribution of $N$, and the moment condition above justifies the corresponding interchange of limits and summation. Hence every positive locally bounded subsequential scaling limit is a power function. Therefore $\gamma(x)=x^{\rho^+}$ for some $\rho^+\in[-1,0)$, and since every subsequential scaling limit has this form, $G=M-F$ is regularly varying at $\infty$ with index $\rho^+$.

It remains to identify $\rho^+$. Substituting $\gamma(x)=x^{\rho^+}$ into \eqref{eq:H_gamma_limit} gives
\begin{align}\label{eq:M_phi_frac}
    H_\gamma(z)
    =
    \Psi_x^+(M)z^{\rho^+}
    \frac{
    E\left[1_{[N\neq0]}N^{-(k_++1)\rho^+}\right]
    }{
    E\left[1_{[N\neq0]}N^{-k_+\rho^+}\right]
    } .
\end{align}
Combining \eqref{eq:M_phi_frac} with \eqref{eq:gamma_limiting_equation}, for $x>0$,
\[
\begin{aligned}
x^{\rho^+}
&=
\Psi_x^+(M)
\frac{
E\left[1_{[N\neq0]}N^{-(k_++1)\rho^+}\right]
}{
E\left[1_{[N\neq0]}N^{-k_+\rho^+}\right]
}
\sum_{n\geq1}\frac{p(n)}{1-p(0)}
\left\{
\frac{1}{n}(nx)^{\rho^+}
+
\frac{n-1}{nx}\int_0^x(ny)^{\rho^+}\,dy
\right\}.
\end{aligned}
\]
Since $\rho^+>-1$,
\[
    \frac{n-1}{nx}\int_0^x(ny)^{\rho^+}\,dy
    =
    \frac{n-1}{n}\frac{n^{\rho^+}x^{\rho^+}}{\rho^++1}.
\]
Dividing by $x^{\rho^+}$ and rearranging gives
\[
\rho^+
=
\frac{\Psi_x^+(M)}{1-p(0)}
\frac{
E\left[1_{[N\neq0]}N^{-(k_++1)\rho^+}\right]
}{
E\left[1_{[N\neq0]}N^{-k_+\rho^+}\right]
}
E\left[
1_{[N\neq0]}\left(\rho^+N^{\rho^+-1}+N^{\rho^+}\right)
\right]
-1.
\]
Thus $\rho^+$ satisfies $\mathcal{T}^+(\rho^+)=\rho^+$. Finally,
\[
    \frac{\Pi^+(F(\alpha y))}{\Pi^+(F(\alpha))}
    \rightarrow
    y^{k_+\rho^+}
\]
shows that $\Pi^+(F)$ is regularly varying at $\infty$ with index $k_+\rho^+$.
\end{proof}

Theorem~\ref{thm:powerlaw_PriceImpact} characterises the power law exponents through the fixed point equations \eqref{eq:PowerLaw_fixedpoint_plus} and \eqref{eq:PowerLaw_fixedpoint_minus}. These equations show that the asymptotic shape of the equilibrium limit order book depends jointly on the endpoint behaviour of the asset value and on the distribution of informed trading participation. In particular, the effect of informed-trader-count uncertainty is not summarized by the expected number of informed traders alone. Markets with the same average level of informed participation may have different large order price impact exponents when probability mass is distributed differently across possible insider counts. 

In the deterministic case, the fixed point equations reduce to affine equations, and the fixed points are unique. For general distributions of \(N\), although uniqueness is not established theoretically in full generality, it is observed in all numerical specifications considered below. We therefore impose uniqueness in the following comparative statics result. The proposition shows that a larger endpoint tail index, corresponding to a thinner asset tail near the endpoint, leads to a power law exponent with smaller magnitude. That is, large order flow is less indicative of extreme asset value realisations, so liquidity suppliers face weaker adverse selection and adjust prices less aggressively. In this sense, a larger endpoint tail index is associated with a flatter limit order book far from the origin.

\begin{proposition}\label{prop:tail_index_price_impact_exponent}
Assume the conditions of Theorem~\ref{thm:powerlaw_PriceImpact} hold. Fix the distribution of \(N\), and assume that the fixed point equations \eqref{eq:PowerLaw_fixedpoint_plus} and \eqref{eq:PowerLaw_fixedpoint_minus} admit unique solutions in \((-1,0)\) for the endpoint tail indices under consideration. Then \(\rho^+\) is strictly increasing in \(\Psi_x^+(M)\), and \(\rho^-\) is strictly increasing in \(\Psi_x^-(m)\).
\end{proposition}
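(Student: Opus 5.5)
The plan is a comparison argument on the scalar fixed point equation. I treat the endpoint derivative \(\psi:=\Psi_x^+(M)\in(0,1)\) as a parameter, with \(k=k(\psi)=\psi/(1-\psi)\) strictly increasing in \(\psi\). I write \(\mathcal T^+(x;\psi)\) for the map in \eqref{eq:PowerLaw_fixedpoint_plus} and set \(\Delta(x;\psi):=\mathcal T^+(x;\psi)-x\) on \((-1,0)\). Then \(\rho^+(\psi)\) is, by assumption, the unique zero of \(\Delta(\cdot;\psi)\) in \((-1,0)\). The statement for \(\rho^-\) follows by the identical argument with \(\Psi_x^-(m)\) and \(k_-\).

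\textbf{Step 1: factor \(\mathcal T^+\).} I write
\[
\mathcal T^+(x;\psi)=\frac{\psi}{1-p(0)}\,R(x,k(\psi))\,S(x)-1,
\qquad
R(x,k):=\frac{E[1_{[N\neq0]}N^{-(k+1)x}]}{E[1_{[N\neq0]}N^{-kx}]},
\qquad
S(x):=E\bigl[1_{[N\neq0]}N^{x-1}(x+N)\bigr].
\]
For \(x\in(-1,0)\) and \(N\geq1\) we have \(x+N>0\), so \(S(x)>0\); also \(R>0\). All expectations are finite by the moment condition of Theorem~\ref{thm:powerlaw_PriceImpact}, which I assume holds for every tail index under consideration.

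\textbf{Step 2: \(\mathcal T^+(x;\psi)\) is strictly increasing in \(\psi\) for each fixed \(x\in(-1,0)\).} The prefactor \(\psi\) is strictly increasing and \(S\) does not depend on \(\psi\), so it suffices to show that \(k\mapsto R(x,k)\) is nondecreasing. Define the tilted probability weights \(w_k(n)\propto p(n)n^{-kx}\) on \(n\geq1\). Then \(R(x,k)=E_{w_k}[N^{-x}]\), and differentiating under the expectation gives
\[
\partial_k R(x,k)=-x\,\mathrm{Cov}_{w_k}\bigl(N^{-x},\log N\bigr)\geq0 .
\]
The inequality holds because \(-x>0\) and both \(N^{-x}\) and \(\log N\) are nondecreasing in \(N\) (Chebyshev's association inequality). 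Differentiation under the sum is justified by the same moment bound, since \(\log n\) is dominated by a small power of \(n\).

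\textbf{Step 3: sign structure of \(\Delta\) and conclusion.} First, \(\Delta(\cdot;\psi)\) is continuous on \((-1,0]\). At \(x=0\), \(R(0,k)=1\) and \(S(0)=1-p(0)\), so \(\Delta(0;\psi)=\psi-1<0\). Since \(\rho^+(\psi)\) is the unique zero in \((-1,0)\), the intermediate value theorem gives \(\Delta(\cdot;\psi)<0\) on \((\rho^+(\psi),0)\).

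Now take \(\psi_1<\psi_2\) and set \(\rho_i:=\rho^+(\psi_i)\). By Step 2,
\[
\Delta(\rho_1;\psi_2)>\Delta(\rho_1;\psi_1)=0 .
\]
Since \(\Delta(\cdot;\psi_2)<0\) on \((\rho_2,0)\) and vanishes at \(\rho_2\), we cannot have \(\rho_1\geq\rho_2\). Hence \(\rho_1<\rho_2\), which is the claimed strict monotonicity.

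\textbf{Main obstacle.} I expect the delicate point to be the sign information on \((\rho^+,0)\). This uses only the endpoint value at \(0\) and the uniqueness hypothesis, so the argument does not need any control of \(\Delta\) near \(-1\). The other point to watch is the uniform justification of finiteness and differentiation of the \(N\)-moments as \(k\) varies; for this I rely on the moment assumption holding across the indices considered.
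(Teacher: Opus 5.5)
Your proposal is correct and follows essentially the same route as the paper. It uses the same factorization of $\mathcal T^+$, the same covariance argument under the $N^{-kx}$-tilted law to show the ratio is nondecreasing in $k$, and the same combination of the endpoint value $\psi-1<0$ at $0$ with uniqueness to order the zeros; you phrase the last step as a sign property on $(\rho^+(\psi),0)$, while the paper applies the intermediate value theorem on $(\rho_1,0)$, but the two are the same argument.
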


\begin{proof}
We prove the upper tail case; the lower tail case is identical. Write
\[
    \psi:=\Psi_x^+(M),
    \qquad
    k:=\frac{\psi}{1-\psi},
\]
and let \(\mathcal T_\psi\) denote the map in \eqref{eq:PowerLaw_fixedpoint_plus}. Set \(g_\psi(\rho):=\mathcal T_\psi(\rho)-\rho\), \(\rho\in(-1,0)\).

Fix \(\rho\in(-1,0)\). In \eqref{eq:PowerLaw_fixedpoint_plus}, the factor \(\psi/(1-p(0))\) is strictly increasing in \(\psi\), while
\[
    \rho N^{\rho-1}+N^\rho=N^{\rho-1}(N+\rho)>0
    \qquad \text{on } \{N\neq0\}.
\]
It remains to consider
\[
    R_k(\rho):=
    \frac{E\left[1_{\{N\neq0\}}N^{-(k+1)\rho}\right]}
    {E\left[1_{\{N\neq0\}}N^{-k\rho}\right]}.
\]
We show that \(R_k(\rho)\) is nondecreasing in \(k\). Differentiating \(\log R_k(\rho)\), and defining the probability measure \(Q_k\) on \(\sigma(N)\) by
\[
    Q_k(A):=
    \frac{E\left[1_A1_{\{N\neq0\}}N^{-k\rho}\right]}
    {E\left[1_{\{N\neq0\}}N^{-k\rho}\right]},
    \qquad A\in\sigma(N),
\]
gives
\[
\begin{aligned}
\frac{d}{dk}\log R_k(\rho)
&=
(-\rho)
\left\{
\frac{E_{Q_k}\left[N^{-\rho}\log N\right]}{E_{Q_k}\left[N^{-\rho}\right]}
-
E_{Q_k}\left[\log N\right]
\right\} \\
&=
(-\rho)
\frac{
\operatorname{Cov}_{Q_k}\left(N^{-\rho},\log N\right)
}{
E_{Q_k}\left[N^{-\rho}\right]
}.
\end{aligned}
\]
The denominator is positive. Since \(-\rho>0\), both \(n\mapsto n^{-\rho}\) and \(n\mapsto\log n\) are increasing on \(\{1,2,\ldots\}\), and hence their covariance under \(Q_k\) is nonnegative. Thus \(R_k(\rho)\) is nondecreasing in \(k\). Since \(k=\psi/(1-\psi)\) is strictly increasing in \(\psi\), it follows that \(g_\psi(\rho)\) is strictly increasing in \(\psi\) for every fixed \(\rho\in(-1,0)\).

Let \(0<\psi_1<\psi_2<1\), and let \(\rho_1,\rho_2\in(-1,0)\) be the unique zeros of \(g_{\psi_1}\) and \(g_{\psi_2}\), respectively. The pointwise monotonicity gives
\[
    g_{\psi_2}(\rho_1)>g_{\psi_1}(\rho_1)=0.
\]
Moreover,
\[
    \lim_{\rho\uparrow0}g_{\psi_2}(\rho)=\psi_2-1<0.
\]
By continuity, \(g_{\psi_2}\) has a zero in \((\rho_1,0)\). By uniqueness, this zero is \(\rho_2\), and hence \(\rho_2>\rho_1\). Therefore \(\rho^+\) is strictly increasing in \(\Psi_x^+(M)\). The assertion for \(\rho^-\) follows in the same way.
\end{proof}

Theorem~\ref{thm:powerlaw_PriceImpact} also implies that the marginal price function $h$ and the marginal cost function $F$ have the same regular variation order in the right tail. Indeed, since $h(x)=\phi_F^+(x)$ for $x>0$, the argument leading to \eqref{eq:M_phi_frac} gives
\begin{equation}\label{eq:h_F_ratio}
    \lim_{x\rightarrow\infty}
    \frac{M-h(x)}{M-F(x)}
    =
    \Psi_x^+(M)
    \frac{
    E\left[1_{[N\neq0]}N^{-(k_++1)\rho^+}\right]
    }{
    E\left[1_{[N\neq0]}N^{-k_+\rho^+}\right]
    } .
\end{equation}
Thus $M-h$ is regularly varying at $\infty$ with the same index $\rho^+$ as $M-F$. The uncertainty in the number of informed traders affects the asymptotic level through the correction factor
\[
    \frac{
    E\left[1_{[N\neq0]}N^{-(k_++1)\rho^+}\right]
    }{
    E\left[1_{[N\neq0]}N^{-k_+\rho^+}\right]
    },
\]
which depends on the distribution of $N$. In the deterministic case $N=n>1$, this factor reduces to $n^{-\rho^+}$, and the fixed point equation in \eqref{eq:PowerLaw_fixedpoint_plus} collapses to the deterministic exponent formula derived in \cite{Cetin_Waelbroeck_2023}. Hence the random-insider specification preserves the power-law asymptotic structure while showing explicitly how uncertainty about informed trading changes the tail constant and the admissible price impact exponent.

We are also interested in the asymptotic behaviour of the implementation shortfall. When the number of informed traders is random, the implementation shortfall describes the difference between a paper trading benchmark and the actual trading cost.

\begin{definition}
The implementation shortfall associated with an individual insider trading $x$ units is defined by
\[
    IS(x)
    :=
    E\left[
    \frac{1}{Nx}\int_0^{Nx}h(Z+u)\,du
    \,\Bigg|\, N\geq1
    \right].
\]
\end{definition}

Since informed traders are symmetric in equilibrium, each informed trader submits the same order size. The expectation is taken with respect to the number of informed traders $N$ and the noise order $Z$, which are independent. By the change of variables $u=Nv$, we equivalently have
\[
    IS(x)
    =
    \frac{1}{x}\int_0^x
    E\left[h(Z+Nv)\mid N\geq1\right]\,dv .
\]
Using this representation, we obtain the following asymptotic relationship between the implementation shortfall in equilibrium $IS^*$ and the marginal cost function $F$. 
The result shows that the average execution cost faced by an informed trader inherits the same power law decay as the marginal cost, with an adjustment reflecting how the trader's order is aggregated with other informed orders.

\begin{corollary}\label{corollary:IS(x)}
Assume the conditions of Theorem~\ref{thm:powerlaw_PriceImpact}. Let $(h^*,x^*)$ be an equilibrium, and let $F$ be the corresponding solution of \eqref{eq:F_FixedPointMapping}. Then
\[
\begin{aligned}
    M-IS^*(x)
    &\sim
    \frac{\Psi_x^+(M)}{1-p(0)}
    \frac{
    E\left[1_{[N\neq0]}N^{-(k_++1)\rho^+}\right]
    }{
    E\left[1_{[N\neq0]}N^{-k_+\rho^+}\right]
    }
    \frac{
    E\left[1_{[N\neq0]}N^{\rho^+}\right]
    }{\rho^++1}
    \{M-F(x)\},
    \qquad x\rightarrow\infty, \\
    IS^*(x)-m
    &\sim
    \frac{\Psi_x^-(m)}{1-p(0)}
    \frac{
    E\left[1_{[N\neq0]}N^{-(k_-+1)\rho^-}\right]
    }{
    E\left[1_{[N\neq0]}N^{-k_-\rho^-}\right]
    }
    \frac{
    E\left[1_{[N\neq0]}N^{\rho^-}\right]
    }{\rho^-+1}
    \{F(x)-m\},
    \qquad x\rightarrow-\infty .
\end{aligned}
\]
\end{corollary}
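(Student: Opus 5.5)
We prove the upper-tail statement; the lower tail is symmetric. The plan is to start from the representation
\[
    IS^*(x)=\frac{1}{x}\int_0^x A(v)\,dv,\qquad A(v):=E\left[h^*(Z+Nv)\mid N\geq1\right]=\sum_{n\geq1}\frac{p(n)}{1-p(0)}\int_{-\infty}^{\infty}q(\sigma,z-nv)\,\phi_F(z)\,dz,
\]
so that
\[
    M-IS^*(x)=\frac{1}{x}\int_0^x\{M-A(v)\}\,dv .
\]
The steps are: first obtain the asymptotics of \(M-A(v)\) as \(v\to\infty\) in terms of \(G:=M-F\); then pass to the Cesàro average using regular variation.

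\textbf{Step 1: asymptotics of \(M-A\).} By \eqref{eq:h_F_ratio}, \(M-\phi_F^+(z)\sim c\,G(z)\) as \(z\to\infty\), where
\[
    c:=\Psi_x^+(M)\,\frac{E[1_{[N\neq0]}N^{-(k_++1)\rho^+}]}{E[1_{[N\neq0]}N^{-k_+\rho^+}]} .
\]
Fix \(n\) and write
\[
    \int q(\sigma,w)\{M-\phi_F(w+nv)\}\,dw .
\]
Divide by \(G(v)\) and split the range into \(w\geq -nv/2\) and \(w<-nv/2\).
\begin{itemize}
    \item On the first region, \(w+nv\to\infty\). Regular variation of \(G\) with index \(\rho^+\) gives \(G(w+nv)/G(v)\to n^{\rho^+}\) pointwise, and Potter bounds give domination by a polynomial in \(|w|\), which is integrable against \(q\).
    \item On the second region, \(M-\phi_F\) is bounded by \(M-m\), while the Gaussian tail mass is \(O(e^{-cv^2})\). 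Since \(G(v)\) decays only polynomially, this contribution is \(o(G(v))\).
\end{itemize}
Hence \(\{M-A_n(v)\}/G(v)\to c\,n^{\rho^+}\).

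\textbf{Step 2: summation over \(n\).} Using the moment assumption \(E[N^{1+(k_+\vee k_-)}]<\infty\), together with the bound \(G(nv)/G(v)\leq C n^{\rho^++\epsilon}\leq C\) for \(n\geq1\) and large \(v\) from Potter's bounds, dominated convergence over \(n\) yields
\[
    M-A(v)\sim \frac{c}{1-p(0)}\,E\left[1_{[N\neq0]}N^{\rho^+}\right]G(v).
\]

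\textbf{Step 3: Cesàro average.} Since \(G\) is regularly varying with index \(\rho^+>-1\), Karamata's theorem gives
\[
    \frac{1}{x}\int_0^x G(v)\,dv\sim\frac{G(x)}{\rho^++1}.
\]
The contribution of any bounded initial interval is \(O(1/x)=o(G(x))\), again because \(\rho^+>-1\). Combining this with Step 2 yields the stated constant
\[
    \frac{c\,E[1_{[N\neq0]}N^{\rho^+}]}{(1-p(0))(\rho^++1)} .
\]
The lower tail follows identically, with \(F-m\), \(\phi_F^-\) and \(x\to-\infty\).

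The main obstacle is Step 1: making the uniform domination rigorous. This requires controlling \(G(w+nv)/G(v)\) simultaneously in \(w\) and \(n\), and handling the region where \(z=w+nv\) is negative and \(\phi_F\) switches to the bid side. I would first rely on Potter bounds on \([v_0,\infty)\) and on crude boundedness of \(\phi_F\) elsewhere, with the Gaussian tail absorbing the latter. If that fails, I would fall back on the same subsequence scaling-limit argument used in the proof of Theorem~\ref{thm:powerlaw_PriceImpact}.
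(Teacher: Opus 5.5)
Your proof is correct and reaches the paper's constant, but by a different decomposition.

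The paper rescales both variables at once ($z\mapsto xz$, $u\mapsto xu$), which turns the average into $\int_0^1du$. It then takes a single limit, using weak convergence of $q(\sigma/x,z-nu)\,dz$ to the point mass at $nu$ together with \eqref{eq:M_phi_frac}. The factor $1/(\rho^++1)$ comes from integrating $(nu)^{\rho^+}$ over $[0,1]$.

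You separate the two limits. First you prove the unaveraged asymptotic $M-E[h^*(Z+Nv)\mid N\ge1]\sim \frac{c}{1-p(0)}E[1_{[N\neq0]}N^{\rho^+}]\,G(v)$, which is of independent interest. Then Karamata's theorem supplies $1/(\rho^++1)$. Both proofs rest on the same input, namely $M-\phi_F^+\sim c\,G$ from \eqref{eq:h_F_ratio}.

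The paper's route is shorter and matches the scaling technique of Theorem~\ref{thm:powerlaw_PriceImpact}. Yours makes the interchanges of limits explicit, which the paper does not. The paper passes the limit through two regions without comment:
\begin{itemize}
\item the region $u\approx0$, where $xu$ is not large and $z^{\rho^+}$ is singular;
\item the bid side $z<0$.
\end{itemize}
In your version the first is absorbed by Karamata, since $\rho^+>-1$ gives $xG(x)\to\infty$. The second is absorbed by the Gaussian tail against the polynomial decay of $G$.

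One small tightening in Step~2. The ratio you need to control is $G(w+nv)/G(v)$ for $w\ge-nv/2$, i.e.\ at arguments $\ge v/2$, not just $G(nv)/G(v)$. Potter's bounds make this ratio bounded uniformly in $n\ge1$ and $w$. Combined with the Gaussian tail estimate, this gives $\{M-A_n(v)\}/G(v)\le C$ uniformly in $n$. Dominated convergence over $n$ then needs only $\sum_n p(n)<\infty$, not the moment condition.
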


\begin{proof}
We shall provide the proof of the first statement. The second statement is analogous. From the representation
\[
    IS^*(x)
    =
    E\left[
    \frac{1}{x}\int_0^x h^*(Z+Nu)\,du
    \,\Bigg|\, N\geq1
    \right],
\]
and using $h^*(y)=\phi_{F}^+(y)$ for $y>0$, we obtain, for $x>0$,
\[
\begin{aligned}
\frac{M-IS^*(x)}{M-F(x)}
&=
\frac{1}{x}
\sum_{n\geq1}\frac{p(n)}{1-p(0)}
\int_0^x
\int_{-\infty}^{\infty}
q(\sigma,z-nu)
\frac{M-\phi_{F}(z)}{M-F(x)}
\,dz\,du \\
&=
\sum_{n\geq1}\frac{p(n)}{1-p(0)}
\int_0^1
\int_{-\infty}^{\infty}
q(\sigma,z-nux)
\frac{M-\phi_{F}(z)}{M-F(x)}
\,dz\,du \\
&=
\sum_{n\geq1}\frac{p(n)}{1-p(0)}
\int_0^1
\int_{-\infty}^{\infty}
q\left(\frac{\sigma}{x},z-nu\right)
\frac{M-\phi_{F}(xz)}{M-F(x)}
\,dz\,du .
\end{aligned}
\]
As $x\rightarrow\infty$, the measure $q(\sigma/x,z-nu)\,dz$ converges weakly to the point mass at $nu$. Combining this with \eqref{eq:M_phi_frac}, we have
\[
\begin{aligned}
\lim_{x\rightarrow\infty}
\frac{M-IS^*(x)}{M-F(x)}
&=
\Psi_x^+(M)
\frac{
E\left[1_{[N\neq0]}N^{-(k_++1)\rho^+}\right]
}{
E\left[1_{[N\neq0]}N^{-k_+\rho^+}\right]
}
\sum_{n\geq1}\frac{p(n)}{1-p(0)}
\int_0^1(nu)^{\rho^+}\,du \\
&=
\frac{\Psi_x^+(M)}{1-p(0)}
\frac{
E\left[1_{[N\neq0]}N^{-(k_++1)\rho^+}\right]
}{
E\left[1_{[N\neq0]}N^{-k_+\rho^+}\right]
}
\frac{
E\left[1_{[N\neq0]}N^{\rho^+}\right]
}{\rho^++1}.
\end{aligned}
\]
\end{proof}

Thus the implementation shortfall and the marginal cost have the same regular variation order.This means that averaging the marginal price over the execution path does not change the asymptotic price impact exponent. Large trades remain governed by the same tail behaviour as the marginal cost function $F$, while the distribution of the number of informed traders affects the asymptotic scaling factor. In particular, uncertainty about insider participation changes the level of the implementation shortfall through the moments of $N$, but it does not introduce a different asymptotic order.

Furthermore, Theorem~\ref{thm:powerlaw_PriceImpact} identifies the asymptotic distribution of an informed trader's equilibrium order. Since $\Pi^+(F)$ is regularly varying at $\infty$ with index $k_+\rho^+$, where $k_+=\Psi_x^+(M)/(1-\Psi_x^+(M))$, we have
\[
    P(x^*>x)
    =
    P(F^{-1}(V)>x)
    =
    P(V>F(x))
    =
    \Pi^+(F(x))
    =
    x^{k_+\rho^+}s_+(x),
    \qquad x\rightarrow\infty,
\]
where $s_+$ is slowly varying. Thus the equilibrium trading volume of an individual informed trader inherits the same power-law tail implied by the marginal cost function.

We next show that the aggregate order $Y^*=Nx^*+Z$ has the same right-tail index. This result links the tail behaviour of individual informed trading to the tail behaviour of total market order flow.

\begin{corollary}\label{corollary:P(Y^{*})}
Assume the conditions of Theorem~\ref{thm:powerlaw_PriceImpact}. Then the aggregate equilibrium order $Y^*=Nx^*+Z$ satisfies
\[
    P(Y^*>y)=y^{k_+\rho^+}s_Y^+(y),
    \qquad y\rightarrow\infty,
\]
where $s_Y^+$ is slowly varying. Similarly,
\[
    P(Y^*<y)=|y|^{k_-\rho^-}s_Y^-(|y|),
    \qquad y\rightarrow-\infty,
\]
where $s_Y^-$ is slowly varying.
\end{corollary}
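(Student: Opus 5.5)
We prove the right-tail statement; the left-tail statement follows by the symmetric argument with \(\Pi^-\), \(m\), \(k_-\) and \(\rho^-\). The idea is to decompose \(P(Y^*>y)\) according to \(N\) and show that, as \(y\to\infty\), it is comparable to \(\Pi^+(F(y))\), which Theorem~\ref{thm:powerlaw_PriceImpact} shows is regularly varying with index \(k_+\rho^+\). We write
\[
    P(Y^*>y)=p(0)P(Z>y)+\sum_{n\geq1}p(n)\,E\left[\Pi^+\left(F\left(\frac{y-Z}{n}\right)\right)\right],
\]
using \(P(nx^*+Z>y\mid Z)=P(V>F((y-Z)/n)\mid Z)\) together with the independence of \(N\), \(V\) and \(Z\). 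The term \(p(0)P(Z>y)\) decays like a Gaussian tail. It is therefore \(o(y^{-c})\) for every \(c>0\), and in particular negligible relative to any regularly varying function of index \(k_+\rho^+>-\infty\).

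Set \(L(y):=\Pi^+(F(y))\). We aim to prove
\[
    \lim_{y\to\infty}\frac{P(Y^*>y)}{L(y)}=E\left[1_{[N\neq0]}N^{-k_+\rho^+}\right]=:c_+\in(0,\infty).
\]
Once this holds, \(P(Y^*>y)=y^{k_+\rho^+}s_Y^+(y)\) with \(s_Y^+:=c_+^{-1}\)-rescaled times the slowly varying part of \(L\), up to a factor tending to a constant. Such a function is still slowly varying.

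For each fixed \(n\), we split the Gaussian expectation into the events \(\{|Z|\le y/2\}\) and \(\{|Z|>y/2\}\). On \(\{|Z|\le y/2\}\), write \((y-Z)/n=(y/n)(1-Z/y)\). The uniform convergence theorem for regularly varying functions, valid on compact subsets of \((0,\infty)\) and here applied with \(1-Z/y\in[1/2,3/2]\), gives
\[
    \frac{L((y-Z)/n)}{L(y)}\to n^{-k_+\rho^+}
\]
pointwise, together with a bound uniform in \(Z\) on this event. Dominated convergence then yields the limit \(n^{-k_+\rho^+}P(|Z|\le y/2)\to n^{-k_+\rho^+}\). On \(\{|Z|>y/2\}\), we use the bound \(\Pi^+\le1\). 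The contribution is then at most \(P(|Z|>y/2)\), which is Gaussian-small and hence \(o(L(y))\).

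The main obstacle is interchanging the limit with the sum over \(n\), since \(n\) can be comparable to \(y\). We plan to use Potter's bounds: for any \(\varepsilon>0\) there exists \(y_0\) such that
\[
    \frac{L(u)}{L(y)}\le C\max\left\{(u/y)^{k_+\rho^+-\varepsilon},(u/y)^{k_+\rho^++\varepsilon}\right\}
    \quad\text{for } u,y\ge y_0.
\]
For \(u=(y-Z)/n\) with \(u\ge y_0\), this gives a dominating term of order \(n^{-k_+\rho^++\varepsilon}\). The exponent satisfies \(-k_+\rho^+<k_+\), so with \(\varepsilon\) small this is bounded by \(n^{k_+}\), which is summable against \(p(n)\) by the moment assumption \(E[N^{1+k_+}]<\infty\). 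When \(u<y_0\), that is \(n>(y-Z)/y_0\), we again use \(\Pi^+\le1\). Restricted to \(|Z|\le y/2\), this part contributes at most
\[
    P\left(N>\frac{y}{2y_0}\right)\le (2y_0)^{1+k_+}\,y^{-(1+k_+)}E\left[N^{1+k_+}\right]
\]
by Markov's inequality. Since \(k_+\rho^+>-k_+>-(1+k_+)\), this is \(o(L(y))\).

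Dominated convergence over \(n\) then gives the limit \(c_+\). The constant is finite by the moment condition and positive because \(p(1)>0\). This yields the claimed regular variation of \(P(Y^*>y)\) with index \(k_+\rho^+\).
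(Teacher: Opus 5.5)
Your proof is correct and follows essentially the paper's route. Both decompose $P(Y^*>y)$ over $N$, discard the Gaussian $N=0$ term, and pass to the limit using the regular variation of $L=\Pi^+(F)$ and the concentration of the Gaussian kernel. The only difference in packaging is that you prove $P(Y^*>y)\sim c_+L(y)$ with $c_+=E[1_{[N\neq0]}N^{-k_+\rho^+}]$, whereas the paper computes the ratio $P(Y^*>\alpha y)/P(Y^*>\alpha)$; its denominator, divided by $L(\alpha)$, tends to exactly your $c_+$.

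Your Potter-bound and Markov-inequality argument supplies the justification for interchanging the limit with the sum over $n$, which the paper only attributes to the moment condition. There is one harmless slip: the slowly varying factor should be asymptotically $c_+$ times, not $c_+^{-1}$ times, that of $L$.
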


\begin{proof}
We prove the right-tail statement. The left-tail statement is analogous. For $y>0$,
\[
\begin{aligned}
P(Y^*>y)
&=
P(Nx^*+Z>y) \\
&=
p(0)P(Z\geq y)
+
\sum_{n\geq1}p(n)
\int_{-\infty}^{\infty}
P\left(x^*>\frac{y-z}{n}\right)q(\sigma,z)\,dz \\
&=
p(0)P(Z\geq y)
+
\sum_{n\geq1}p(n)
\int_{-\infty}^{\infty}
\Pi^+\left(F\left(\frac{y-z}{n}\right)\right)q(\sigma,z)\,dz .
\end{aligned}
\]
Equivalently, after the change of variables $r=(y-z)/n$,
\[
    P(Y^*>y)
    =
    p(0)P(Z\geq y)
    +
    \sum_{n\geq1}p(n)n
    \int_{-\infty}^{\infty}
    \Pi^+(F(r))q(\sigma,y-nr)\,dr .
\]
Let $L(y):=\Pi^+(F(y))$. By Theorem~\ref{thm:powerlaw_PriceImpact}, $L$ is regularly varying at $\infty$ with index $k_+\rho^+<0$. Since the Gaussian tail is exponentially small, $P(Z\geq \alpha y)/L(\alpha)\rightarrow0$ for every fixed $y>0$. Hence, for $y>0$,
\[
\begin{aligned}
\frac{P(Y^*>\alpha y)}{P(Y^*>\alpha)}
&=
\frac{
p(0)P(Z\geq\alpha y)
+
\sum_{n\geq1}p(n)n
\int_{-\infty}^{\infty}
L(r)q(\sigma,\alpha y-nr)\,dr
}{
p(0)P(Z\geq\alpha)
+
\sum_{n\geq1}p(n)n
\int_{-\infty}^{\infty}
L(r)q(\sigma,\alpha-nr)\,dr
} \\
&\sim
\frac{
\sum_{n\geq1}p(n)
\int_{-\infty}^{\infty}
\frac{L(\alpha r)}{L(\alpha)}
q\left(\frac{\sigma}{\alpha n},\frac{y}{n}-r\right)\,dr
}{
\sum_{n\geq1}p(n)
\int_{-\infty}^{\infty}
\frac{L(\alpha r)}{L(\alpha)}
q\left(\frac{\sigma}{\alpha n},\frac{1}{n}-r\right)\,dr
}.
\end{aligned}
\]
The moment condition in Theorem~\ref{thm:powerlaw_PriceImpact} justifies the interchange of the limit and the summation over $n$. Since $q(\sigma/(\alpha n),y/n-r)\,dr$ converges weakly to the point mass at $y/n$, regular variation of $L$ gives
\[
\begin{aligned}
\lim_{\alpha\rightarrow\infty}
\frac{P(Y^*>\alpha y)}{P(Y^*>\alpha)}
&=
\frac{
\sum_{n\geq1}p(n)\left(\frac{y}{n}\right)^{k_+\rho^+}
}{
\sum_{n\geq1}p(n)\left(\frac{1}{n}\right)^{k_+\rho^+}
}
=
y^{k_+\rho^+}.
\end{aligned}
\]
Therefore $P(Y^*>y)$ is regularly varying at $\infty$ with index $k_+\rho^+$. This proves the right-tail statement.
\end{proof}

Recall that $\Psi_x^+(M)<1$ captures a power-type upper endpoint tail of the asset value distribution. The results above show that this endpoint tail behaviour is transmitted through the equilibrium: the marginal cost $F$, the marginal price $h$, the implementation shortfall, the individual informed order $x^*$, and the aggregate order $Y^*$ all exhibit power-law asymptotics. The random number of informed traders affects the constants and the admissible exponents through the distribution of $N$, but it does not destroy the power-law structure of the equilibrium.

In contrast, when \( \Psi_x^+(M) = 1 \), corresponding to a light-tailed asset value distribution, the asymptotic behaviour of price impact is no longer polynomial. In this regime, the marginal cost function grows logarithmically in trade size. This is formalised in the following theorem.
\begin{theorem}\label{thm:logarithmic_PriceImpact}
Assume Assumption~\ref{assumption:Psi} holds, $p(0)<1$, and $-\infty<m<M<\infty$. Let $N$ have probability mass function $p(n)$, $n=0,1,2,\dots$, and let $F$ be a strictly increasing solution of \eqref{eq:F_FixedPointMapping}. In the light endpoint case $\Psi_x^+(M)=\Psi_x^-(m)=1$, the following statements hold:
\begin{enumerate}[label=(\arabic*)]
    \item Suppose that there exist an integer $r_+\geq1$ and a constant $\kappa_+\in(0,\infty)$ such that
    \begin{align}\label{eq:cond_Psi_x=1_plus}
        \lim_{x\rightarrow M}
        \frac{\Psi^+(x)-x}{(M-x)^{r_++1}}
        =
        \frac{1}{\kappa_+}.
    \end{align}
    Assume further that the expectations appearing in \eqref{eq:log_fixedpoint_plus} are finite. Then $\Pi^+(F)$ is regularly varying at $\infty$ with index $-\kappa_+\rho^+$, where $\rho^+>0$ solves
    \begin{align}\label{eq:log_fixedpoint_plus}
        \rho^+
        =
        \frac{1}{\kappa_+}
        +
        \frac{\rho^+}{1-p(0)}
        E\left[
        1_{[N\neq0]}
        \left(\log N+\frac{1}{N}\right)
        \right]
        -
        \rho^+
        \frac{
        E\left[1_{[N\neq0]}N^{\kappa_+\rho^+}\log N\right]
        }{
        E\left[1_{[N\neq0]}N^{\kappa_+\rho^+}\right]
        }.
    \end{align}
    Moreover,
    \[
        M-F(x)
        \sim
        (r_+\rho^+)^{-1/r_+}(\log x)^{-1/r_+},
        \qquad x\rightarrow\infty .
    \]

    \item Suppose that there exist an integer $r_-\geq1$ and a constant $\kappa_-\in(0,\infty)$ such that
    \begin{align}\label{eq:cond_Psi_x=1_minus}
        \lim_{x\rightarrow m}
        \frac{x-\Psi^-(x)}{(x-m)^{r_-+1}}
        =
        \frac{1}{\kappa_-}.
    \end{align}
    Assume further that the expectations appearing in \eqref{eq:log_fixedpoint_minus} are finite. Then $\Pi^-(F)$ is regularly varying at $-\infty$ with index $-\kappa_-\rho^-$, where $\rho^->0$ solves
    \begin{align}\label{eq:log_fixedpoint_minus}
        \rho^-
        =
        \frac{1}{\kappa_-}
        +
        \frac{\rho^-}{1-p(0)}
        E\left[
        1_{[N\neq0]}
        \left(\log N+\frac{1}{N}\right)
        \right]
        -
        \rho^-
        \frac{
        E\left[1_{[N\neq0]}N^{\kappa_-\rho^-}\log N\right]
        }{
        E\left[1_{[N\neq0]}N^{\kappa_-\rho^-}\right]
        }.
    \end{align}
    Moreover,
    \[
        F(x)-m
        \sim
        (r_-\rho^-)^{-1/r_-}(\log |x|)^{-1/r_-},
        \qquad x\rightarrow-\infty .
    \]
\end{enumerate}
\end{theorem}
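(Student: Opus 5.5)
The plan is to reduce the light endpoint case to a second-order version of the scaling argument in Theorem~\ref{thm:powerlaw_PriceImpact}. I treat only statement (1); statement (2) follows by the same argument with $(m,\Psi^-,\Pi^-)$.

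\textbf{Step 1: translate the endpoint condition into a tail of $\Pi^+$.} Write $G:=M-F$, $r:=r_+$, $\kappa:=\kappa_+$. By Lemma~\ref{lemma:F_limits}, $G>0$ and $G(x)\to0$ as $x\to\infty$. The identity $-\Pi^+_x/\Pi^+=\Psi^+_x/(\Psi^+-x)$ from the proof of Theorem~\ref{thm:powerlaw_PriceImpact} applies here. Combined with \eqref{eq:cond_Psi_x=1_plus} and $\Psi^+_x(M)=1$, it gives
\[
-\frac{d}{dx}\log\Pi^+(x)\sim\kappa(M-x)^{-(r+1)},
\qquad\text{hence}\qquad
-\log\Pi^+(x)\sim\frac{\kappa}{r}(M-x)^{-r}.
\]
Therefore the two conclusions are equivalent, to leading order. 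The statement that $L:=\Pi^+(F)$ is regularly varying with index $-\kappa\rho^+$ amounts to $\log L(\alpha y)-\log L(\alpha)\to-\kappa\rho^+\log y$. The statement $G(x)\sim(r\rho^+\log x)^{-1/r}$ is the integrated form of the same relation. So the real task is to identify the increments of $G$ on a logarithmic scale.

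\textbf{Step 2: set up the correct (second-order) scaling.} The first-order scaling used in Theorem~\ref{thm:powerlaw_PriceImpact} now degenerates. Since $\Psi^+_x(M)=1$, we have $M-\Psi^+(F)=G-(\Psi^+(F)-F)$. The correction $\Psi^+(F)-F\sim G^{r+1}/\kappa$ is of smaller order, so the ratio $\gamma_\alpha$ tends to $1$ and carries no information. I will instead put $\varepsilon_\alpha:=G(\alpha)^{r+1}$ and
\[
\theta_\alpha(x):=\frac{G(\alpha)-G(\alpha x)}{\varepsilon_\alpha}.
\]
The heuristic target is $\theta_\alpha(x)\to\rho^+\log x$. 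On this scale:
\begin{itemize}
\item $\log L(\alpha y)-\log L(\alpha)\to-\kappa\,\theta(y)$, obtained from Step 1 by the mean value theorem;
\item $\bigl(M-\Psi^+(F(\alpha y))-G(\alpha)\bigr)/\varepsilon_\alpha\to-\theta(y)-1/\kappa$.
\end{itemize}

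\textbf{Step 3: pass to the limiting equation.} Insert these expansions into $\phi_F^+$ from Definition~\ref{def:phi_g}. Use the Gaussian concentration $q(\sigma/(\alpha n),\cdot)\Rightarrow\delta$ and drop the $p(0)P(Z\ge\cdot)$ term, which is exponentially small. Then $(M-\phi_F^+(\alpha z)-G(\alpha))/\varepsilon_\alpha$ converges to
\[
H_\theta(z):=-\frac{1}{\kappa}-\frac{\sum_{m\ge1}p(m)e^{-\kappa\theta(z/m)}\theta(z/m)}{\sum_{m\ge1}p(m)e^{-\kappa\theta(z/m)}}.
\]
Next subtract $G(\alpha)$ in the scaled fixed point equation \eqref{eq:F_FixedPointMapping}, whose kernels have total mass one. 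Every subsequential limit $\theta$ then satisfies
\[
-\theta(x)=\sum_{n\ge1}\frac{p(n)}{1-p(0)}\Bigl\{\tfrac1nH_\theta(nx)+\tfrac{n-1}{nx}\int_0^xH_\theta(ny)\,dy\Bigr\}.
\]
Substitute $\theta=\rho\log x$. The weights become proportional to $N^{\kappa\rho}$. Using $\frac1x\int_0^x\log(ny)\,dy=\log n+\log x-1$, the $\log x$ terms cancel. Collecting the constants gives exactly \eqref{eq:log_fixedpoint_plus}, with the terms $E[1_{[N\ne0]}(\log N+1/N)]$ and the tilted mean of $\log N$. The finiteness assumption on these expectations supplies the domination needed to interchange limits with the sums over $n$ and $m$. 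Once $\theta_\alpha\to\rho^+\log$ is known, Step 1 gives both the regular variation of $L$ and, after integrating $dG^{-r}/d\log x\to r\rho^+$, the asymptotic $M-F(x)\sim(r\rho^+\log x)^{-1/r}$.

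\textbf{Main obstacle.} The hard part is showing that every subsequential limit of $\theta_\alpha$ exists and is exactly logarithmic. Two things are needed. First, a priori bounds: $\theta_\alpha$ is locally bounded (nondecreasing, since $F$ is) and the remainders are uniform in $\alpha$, which should come from the monotonicity in Lemma~\ref{lemma:phi_property}. Second, a uniqueness statement for the limiting equation. My first attempt would adapt the comparison method of Theorem~B.1 of \cite{Cetin_Waelbroeck_2023}, as was done for \eqref{eq:gamma_limiting_equation}. Under the change of variables $x=e^t$, the limiting equation is translation-equivariant in $t$, and the natural claim is that monotone solutions are affine in $t$. I would prove this by comparing a solution with its shifts $\theta(\cdot\,c)-\theta(c)$ and using a maximum-principle argument.
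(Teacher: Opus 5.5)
This is essentially the paper's proof. Your $\theta_\alpha$ is exactly its $r_\alpha(x)=(F(\alpha x)-F(\alpha))/G(\alpha)^{r_++1}$. The paper likewise delegates the logarithmic form of every subsequential limit to the compactness and comparison argument of \cite{Cetin_Waelbroeck_2023}, and it identifies $\rho^+$ by the same tilted substitution. Your one deviation is to read off regular variation of $\Pi^+(F)$ from the increments via the mean value theorem on $\log\Pi^+$. The paper instead goes through $\exp\{G^{-r_+}\}$, a limit for $xF'(x)/G(x)^{r_++1}$ and the converse Karamata theorem, and that derivative limit does not follow from Karamata's theorem alone, so your route is, if anything, cleaner.

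One caveat that neither you nor the paper addresses concerns solvability and uniqueness of the equation for $\rho^+$. Write $m(\rho)$ for the mean of $\log N$ under the weights $p(n)n^{\kappa_+\rho}$, $n\ge1$. Then \eqref{eq:log_fixedpoint_plus} reads $\rho\bigl(1-E[N^{-1}\mid N\ge1]+m(\rho)-m(0)\bigr)=1/\kappa_+$. When $P(N\ge2)=0$ this has no solution, so the statement fails in the uncertain-monopolist case. When $P(N\ge2)>0$ it has at most one solution, because $m$ is nondecreasing. That uniqueness is exactly what you need to pass from subsequential limits of $\theta_\alpha$ to full convergence.
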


\begin{proof}
We prove the right tail statement. The left tail is analogous. Set
\[
    G(x):=M-F(x),\qquad x>0.
\]
In the light endpoint case $\Psi_x^+(M)=1$, the power law exponent degenerates to zero, and the relevant scaling is logarithmic. For $\alpha>0$, define
\[
    r_\alpha(x)
    :=
    \frac{F(\alpha x)-F(\alpha)}{G(\alpha)^{r_++1}},
    \qquad x>0.
\]
The compactness and comparison argument follows the proof of Theorem~B.2 of \cite{Cetin_Waelbroeck_2023}. In the present setting, the deterministic number of insiders is replaced by averaging with respect to the distribution of $N$, and the finiteness of the expectations appearing below justifies the corresponding interchange of limits and summation. Hence every subsequential limit of $r_\alpha$ is of the form
\(
    \gamma(x)=\rho^+\log x,
\)
for some $\rho^+>0$.

We first show that $\Pi^+(F)$ is regularly varying. Define
\[
    g(x):=\exp\{G(x)^{-r_+}\}.
\]
Since
\[
\begin{aligned}
\log\frac{g(\alpha x)}{g(\alpha)}
&=
G(\alpha x)^{-r_+}-G(\alpha)^{-r_+}  \\
&=
\frac{F(\alpha x)-F(\alpha)}{G(\alpha)^{r_++1}}
\frac{G(\alpha)}{G(\alpha x)}
\sum_{i=1}^{r_+}
\left(\frac{G(\alpha x)}{G(\alpha)}\right)^i ,
\end{aligned}
\]
and $G(\alpha x)/G(\alpha)\rightarrow1$, we obtain
\[
    \frac{g(\alpha x)}{g(\alpha)}
    \rightarrow
    \exp\{r_+\rho^+\log x\}
    =
    x^{r_+\rho^+}.
\]
Thus $g$ is regularly varying at $\infty$ with index $r_+\rho^+$. By Karamata's theorem,
\begin{align}\label{eq:log_F_derivative_limit}
    \lim_{x\rightarrow\infty}
    \frac{xF'(x)}{G(x)^{r_++1}}
    =
    \rho^+ .
\end{align}
Moreover, the endpoint assumption \eqref{eq:cond_Psi_x=1_plus} gives
\[
    \Psi^+(F(x))-F(x)
    \sim
    \frac{1}{\kappa_+}G(x)^{r_++1}.
\]
Using
\[
    -\frac{\Pi_x^+(x)}{\Pi^+(x)}
    =
    \frac{\Psi_x^+(x)}{\Psi^+(x)-x},
\]
we obtain
\[
\begin{aligned}
\lim_{x\rightarrow\infty}
\frac{\Pi^+(F(x))}
{\int_x^\infty \Pi^+(F(t))\,dt/t}
&=
-\lim_{x\rightarrow\infty}
\frac{x\Pi_x^+(F(x))F'(x)}{\Pi^+(F(x))} \\
&=
\lim_{x\rightarrow\infty}
\frac{x\Psi_x^+(F(x))F'(x)}
{\Psi^+(F(x))-F(x)}
=
\kappa_+\rho^+ .
\end{aligned}
\]
Therefore, by the converse part of Karamata's theorem, $\Pi^+(F)$ is regularly varying at $\infty$ with index $-\kappa_+\rho^+$.

It remains to identify $\rho^+$. The preceding regular variation implies
\[
    \frac{\Pi^+(F(\alpha y))}{\Pi^+(F(\alpha))}
    \rightarrow
    y^{-\kappa_+\rho^+}.
\]
Introduce the tilted probability mass function
\[
    p^1(n;\kappa_+,\rho^+)
    :=
    \frac{p(n)1_{[n\neq0]}n^{\kappa_+\rho^+}}
    {E\left[1_{[N\neq0]}N^{\kappa_+\rho^+}\right]},
    \qquad n\geq1.
\]
By Definition~\ref{def:phi_g}, the concentration of
$q(\sigma/(\alpha n),y-z/n)\,dy$ at $z/n$, and the endpoint condition, we have
\[
    \frac{\phi_F^+(\alpha z)-F(\alpha)}
    {G(\alpha)^{r_++1}}
    \rightarrow
    \frac{1}{\kappa_+}
    +
    E_{p^1}\left[
    \rho^+\log\left(\frac{z}{N}\right)
    \right],
    \qquad z>0.
\]
Substituting this limit into the scaled fixed point equation gives, for $x>0$,
\[
\begin{aligned}
\rho^+\log x
&=
\frac{1}{\kappa_+}
+
\sum_{n\geq1}\frac{p(n)}{1-p(0)}
\left\{
\frac{1}{n}
E_{p^1}\left[
\rho^+\log\left(\frac{nx}{N}\right)
\right]\right.\\
&\hspace{3.4cm}\left.
+
\frac{n-1}{nx}
\int_0^x
E_{p^1}\left[
\rho^+\log\left(\frac{ny}{N}\right)
\right]\,dy
\right\}.
\end{aligned}
\]
Evaluating the elementary integral and cancelling the common $\rho^+\log x$ terms yields
\[
    \rho^+
    =
    \frac{1}{\kappa_+}
    +
    \frac{\rho^+}{1-p(0)}
    E\left[
    1_{[N\neq0]}
    \left(\log N+\frac{1}{N}\right)
    \right]
    -
    \rho^+
    \frac{
    E\left[1_{[N\neq0]}N^{\kappa_+\rho^+}\log N\right]
    }{
    E\left[1_{[N\neq0]}N^{\kappa_+\rho^+}\right]
    }.
\]

Finally, since $g(x)=\exp\{G(x)^{-r_+}\}$ is regularly varying with index $r_+\rho^+$,
\[
    G(x)^{-r_+}
    =
    r_+\rho^+\log x+o(\log x).
\]
Therefore
\[
    M-F(x)=G(x)
    \sim
    (r_+\rho^+)^{-1/r_+}(\log x)^{-1/r_+},
    \qquad x\rightarrow\infty .
\]
\end{proof}

\medskip

Theorem~\ref{thm:logarithmic_PriceImpact} shows that light endpoint tails generate logarithmic, rather than power-law, price impact. When the asset value distribution has a sufficiently thin endpoint tail, increasingly large buy orders reveal information only gradually as the price approaches the upper bound $M$. As a result, the marginal cost $F(x)$ converges to $M$ at the logarithmic rate $(\log x)^{-1/r_+}$, and the corresponding sell-side behaviour is analogous near the lower bound $m$. The uncertainty in the number of informed traders changes the logarithmic coefficient through the distribution of $N$, but it does not change the logarithmic nature of the asymptotic price impact. When the distribution of $N$ degenerates to a deterministic value, the result reduces to the logarithmic price impact obtained in \cite{Cetin_Waelbroeck_2023}.

\section{Numerical Experiments}\label{section:Numerics}

In this section, we study the equilibrium model numerically. We first solve the fixed point equation \eqref{eq:F_FixedPointMapping} for the marginal cost function \(F\), and then recover the equilibrium limit price function \(h^*=\phi_F\) from Definition~\ref{def:phi_g}. This provides the numerical counterpart of the equilibrium characterisation in Section~\ref{section::generalRandom}. We then examine the large-order behaviour of price impact and compare the numerical results with the asymptotic predictions in Section~\ref{section:PriceAsymptotics}. Finally, we study how the equilibrium changes with both the distribution of the number of informed traders and the distribution of the asset value. Some of these numerical experiments fall outside the assumptions of the theoretical results above, and are therefore intended to document robust empirical patterns suggested by the fixed point equation.

\subsection{Setup}\label{subsection:Setup}

We consider several specifications for the asset value \(V\). The first is a bounded power-law family on \((-M,M)\), with density proportional to \((1-|v|/M)^{\alpha-1}\). For this family, the upper endpoint satisfies \(\Psi_x^+(M)=\alpha/(\alpha+1)\in(0,1)\), and the lower endpoint is symmetric.
Hence this specification falls directly under the power-law regime of Theorem~\ref{thm:powerlaw_PriceImpact}. We also consider two unbounded heavy-tailed distributions: the Student-\(t\) distribution with \(\nu\) degrees of freedom, centred at zero and symmetric, and the one-sided Pareto distribution with tail index \(\alpha\), supported on the positive half-line. Finally, we use the Gaussian distribution as an unbounded light-tailed benchmark. Unless otherwise specified, the Gaussian benchmark is \(\mathcal N(0,1)\), and the Student-\(t\) distribution is scaled to have variance one; in particular, when \(\nu=3\), we use scale \(1/\sqrt{3}\). The Student-\(t\) is symmetric about zero like the bounded family, whereas the one-sided Pareto is asymmetric (\(V>0\) almost surely), included to see how the equilibrium responds when the value distribution is not symmetric. The Student-\(t\) and Pareto specifications are not covered by the bounded-support existence result in Theorem~\ref{thm:equilibrium_existence}, nor by the bounded-endpoint asymptotic theory in Theorem~\ref{thm:powerlaw_PriceImpact}. They are included to test whether the fixed point iteration remains stable and whether the power-law pattern remains visible beyond the compact-support setting. The Gaussian case is used to compare the numerical behaviour with the logarithmic type of price impact identified in the light-endpoint regime of
Theorem~\ref{thm:logarithmic_PriceImpact}.

For the number of informed traders \(N\), we use the negative binomial distribution with mean \(\mu\) and dispersion parameter \(r>0\),
\[
    P(N=n)=\frac{\Gamma(n+r)}{n!\,\Gamma(r)}
    \left(\frac{r}{r+\mu}\right)^r
    \left(\frac{\mu}{r+\mu}\right)^n,
    \qquad n=0,1,2,\dots .
\]
This specification includes the geometric distribution when \(r=1\) and approaches the Poisson distribution as \(r\rightarrow\infty\). Since \(\operatorname{Var}(N)=\mu+\mu^2/r\), increasing \(r\) reduces the dispersion of \(N\) while keeping the mean fixed. This allows us to vary the dispersion of \(N\) while controlling the relevant mean in the numerical experiments below.

We solve the equilibrium fixed point equation by iteration, \(F^{(j+1)}=T(F^{(j)}),\) where \(T\) denotes the operator in \eqref{eq:F_FixedPointMapping}. The iteration is stopped when \(\|F^{(j+1)}-F^{(j)}\|_\infty<10^{-8}\). All functions are computed on a symmetric grid, and the Gaussian noise integrals appearing in the fixed point operator are evaluated by numerical quadrature.
Unless otherwise stated, all numerical experiments use noise trading variance
\(\sigma_Z^2=0.25\).

\subsection{Numerical Results}\label{subsection:Results}

We begin with the numerical equilibrium. Figure~\ref{fig:Fh} displays the marginal cost function \(F\) and the limit price in equilibrium \(h^*=\phi_F\), obtained by solving the fixed point equation \eqref{eq:F_FixedPointMapping} and then applying the pricing operator in Definition~\ref{def:phi_g}. The top row plots \(F\), while the bottom row plots \(h\). Each column corresponds to a different specification of \(V\), and the coloured curves vary the dispersion parameter \(r\) of the negative-binomial distribution for \(N\). Throughout, \(E[N\mid N\geq1]=3\). Concretely, for each dispersion \(r\), we choose the unconditional mean \(\mu=E[N]\) so that \(\mu/(1-p_0)=3\), where \(p_0=(1+\mu/r)^{-r}=P(N=0)\). Thus the conditional mean is held fixed while the unconditional mean and the probability of informed trading vary with \(r\).

Across all four asset distributions, the ordering of the curves is stable. Holding \(E[N\mid N\geq1]\) fixed, price impact is steepest when \(r=30\), which is close to the Poisson case, and becomes flatter as \(r\) decreases toward the geometric case \(r=1\). Thus, in these specifications, greater dispersion in the number of informed traders leads to a less steep marginal cost and limit price, and hence to a more liquid book away from the origin. The limit price also exhibits a jump at the origin, equal to the equilibrium bid--ask spread \(h^*(0+)-h^*(0-)\), reflecting the compensation liquidity suppliers require for adverse selection.

\begin{figure}[!htbp]
  \centering
  \includegraphics[width=\textwidth]{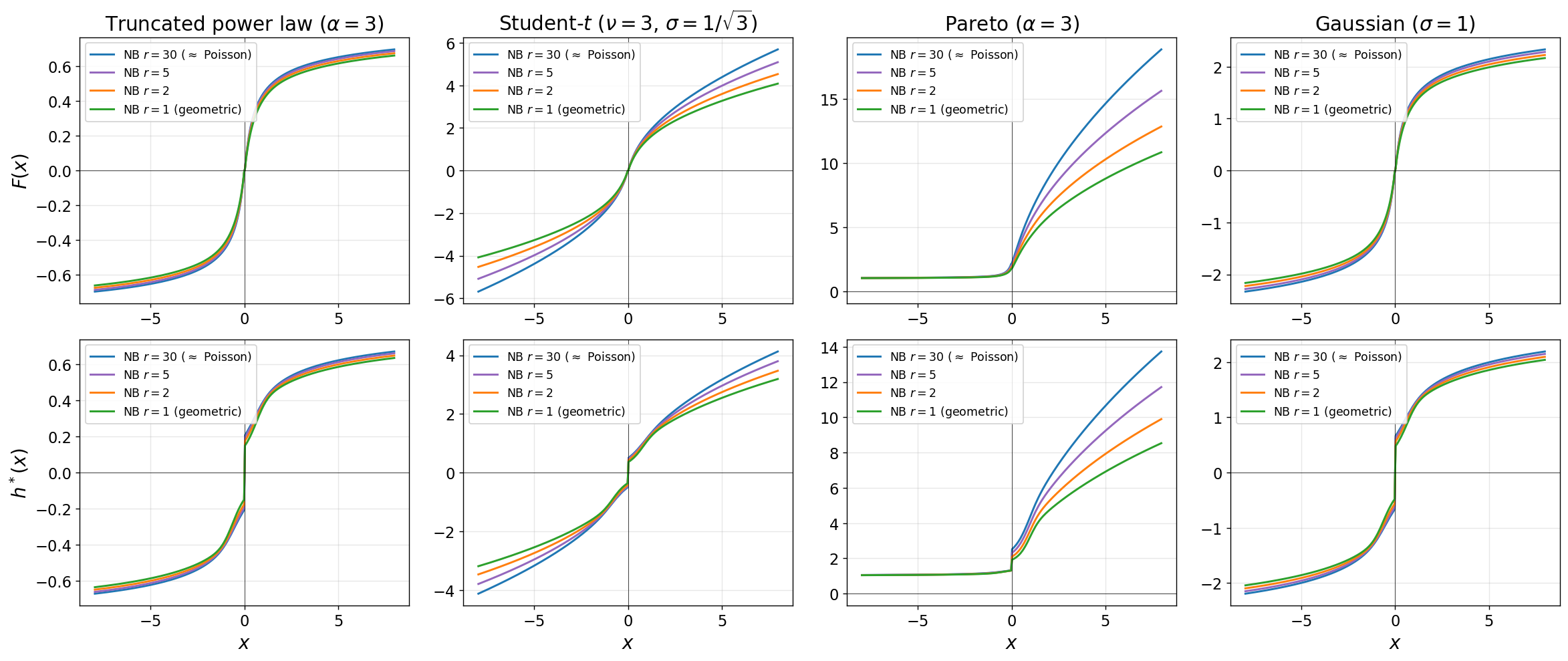}
  \caption{Equilibrium marginal cost \(F\) (top row) and limit price \(h^*=\phi_F\)
  (bottom row) as functions of the order size \(x\). The columns correspond to a
  truncated power law with \(\alpha=3\), Student-\(t\) with \(\nu=3\), Pareto
  with \(\alpha=3\), and Gaussian asset value distribution. The number of informed traders follows a negative-binomial distribution with fixed
  conditional mean \(E[N\mid N\ge1]=3\) and dispersion \(r\in\{1,2,5,30\}\).}
  \label{fig:Fh}
\end{figure}

Figure~\ref{fig:powerlaw_fit} compares the numerical tail behaviour with the asymptotic predictions. For the bounded power-law asset, the fitted log--log slope of \(M-F(x)\) matches the theoretical index \(\rho^+\) from Theorem~\ref{thm:powerlaw_PriceImpact}. The same comparison for \(M-h^*(x)\) in panel~\ref{fig:powerlaw_fit_h} confirms the theoretical implication that \(M-h^*\) and \(M-F\) are regularly varying with the same index; see \eqref{eq:h_F_ratio}. For the unbounded Student-\(t\) and Pareto specifications, which are outside the bounded-support theory, both \(F\) and \(h^*\) display clear power-law growth over the fitted range. The Gaussian case is instead approximately linear in \(\sqrt{\ln x}\), consistent with logarithmic price impact.

\begin{figure}[!htbp]
  \centering
  \begin{subfigure}{\textwidth}
    \centering
    \includegraphics[width=\textwidth]{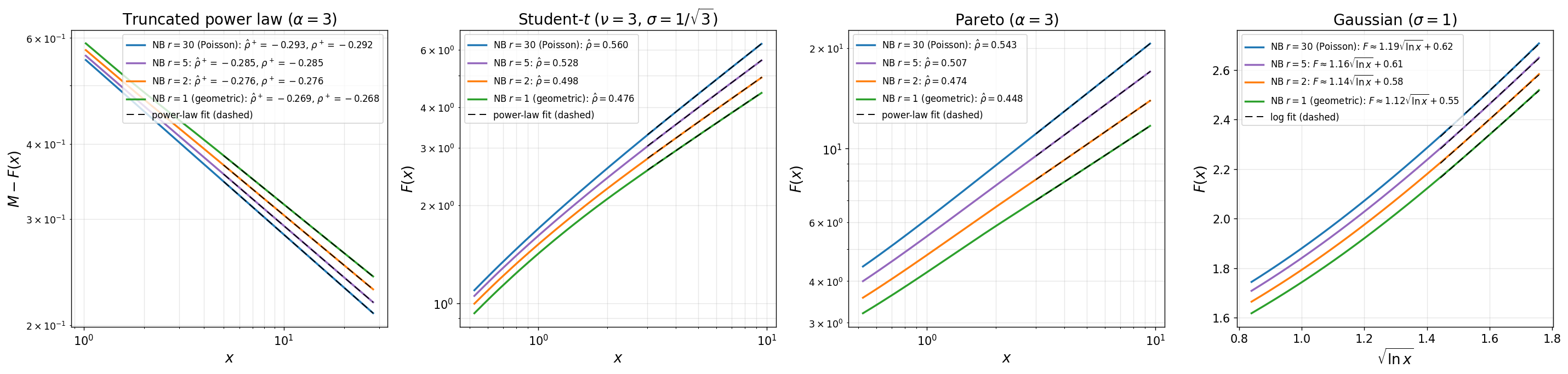}
    \caption{Marginal cost \(F\).}
    \label{fig:powerlaw_fit_F}
  \end{subfigure}

  \vspace{0.6em}

  \begin{subfigure}{\textwidth}
    \centering
    \includegraphics[width=\textwidth]{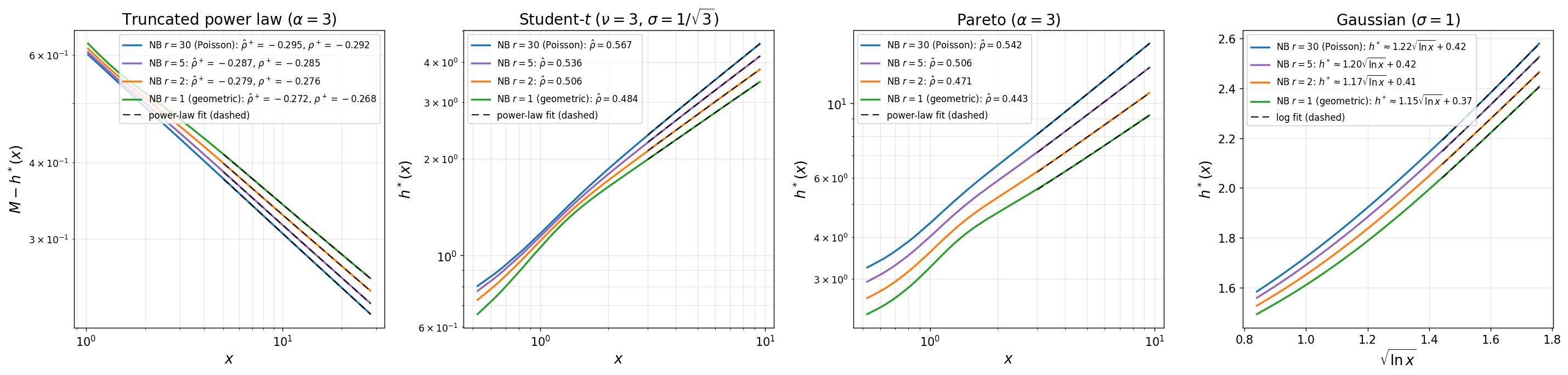}
    \caption{Limit price \(h^*\).}
    \label{fig:powerlaw_fit_h}
  \end{subfigure}

  \caption{Tail fits for the marginal cost \(F\) and limit price \(h^*=\phi_F\). The conditional mean is fixed at \(E[N\mid N\ge1]=3\), with dispersion \(r\in\{1,2,5,30\}\). Solid lines are numerical equilibrium tails; black dashed lines are fitted power-law or logarithmic tails.}
  \label{fig:powerlaw_fit}
\end{figure}

Figure~\ref{fig:bounded_index_statics} studies comparative statics of the regular variation index \(\rho^+\) for the bounded power law asset, where \(\rho^+\in(-1,0)\) governs the tail behaviour of \(M-F(x)\) and is characterised by the fixed point equation \eqref{eq:PowerLaw_fixedpoint_plus}.
The left panel varies the endpoint parameter \(\Psi_x^+(M)\). Since \(\rho^+<0\), Proposition~\ref{prop:tail_index_price_impact_exponent} predicts that \(|\rho^+|\) decreases as \(\Psi_x^+(M)\) increases. The numerical curves confirm this monotonicity over the plotted range \(\Psi_x^+(M)\ge1/2\), corresponding to \(\alpha\ge1\), where the bounded density does not diverge at the support endpoints.

The right panel varies \(E[N\lvert N\ge1]\) at fixed \(\Psi_x^+(M)=0.75\). Along each fixed \(r\) negative binomial subfamily, \(|\rho^+|\) decreases as this conditional mean increases, consistent with the deterministic benchmark. Because changing the mean parameter also changes higher conditional features of \(N\lvert N\ge1\), this comparison should be read as a negative binomial family comparative static rather than a pure conditional mean effect.
The coloured curves show that dispersion in \(N\lvert N\ge1\) also affects the tail index. The ordering across \(r\) can vary with the asset tail parameter \(\Psi_x^+(M)\). For the specifications shown here, the more dispersed negative binomial laws produce smaller values of \(|\rho^+|\). We therefore view this as a numerical pattern for the present parametrisation, rather than as a general comparative statics result.

A natural interpretation is that greater dispersion in the conditional distribution of \(N\), which we use here as a measure of uncertainty about the realized number of informed traders, weakens the informational content of a given aggregate order. When liquidity suppliers observe a large order, they must distinguish whether it reflects a large individual informed trade, a larger number of informed traders, or noise demand. Greater dispersion in \(N\mid N\geq 1\) makes this inference less direct. As a result, the same order flow may carry weaker adverse selection content, and liquidity suppliers adjust marginal prices less aggressively, leading to a flatter equilibrium book in the specifications considered here.

\begin{figure}[!htbp]
  \centering
  \includegraphics[width=0.8\textwidth]{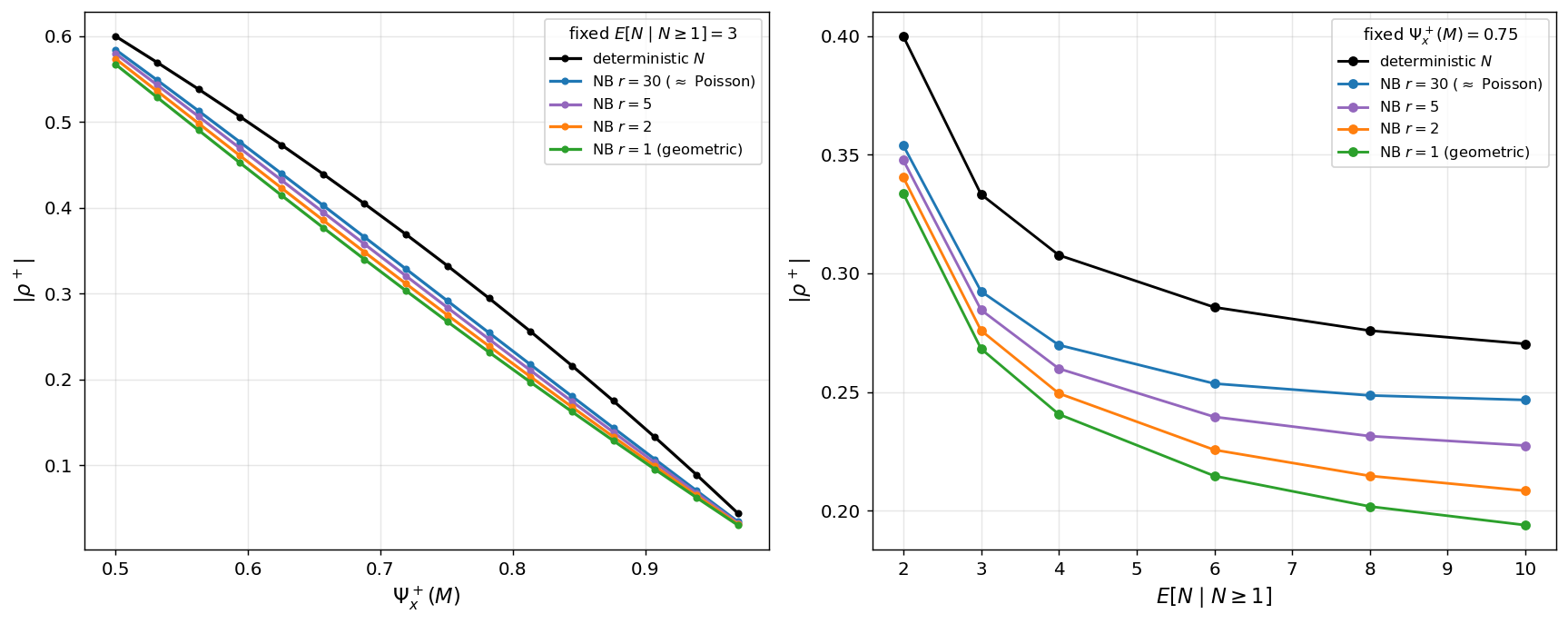}
  \caption{Comparative statics of the market impact index magnitude \(|\rho^+|\) for the bounded power-law asset, where
  \(\Psi_x^+(M)=\alpha/(\alpha+1)\). The left panel varies \(\Psi_x^+(M)\) at fixed \(E[N\mid N\ge1]=3\), while the right panel varies \(E[N\mid N\ge1]\) at fixed \(\Psi_x^+(M)=0.75\). The black curve is the deterministic insider-count benchmark; coloured curves correspond to negative-binomial specifications with different dispersion parameters \(r\).}
  \label{fig:bounded_index_statics}
\end{figure}

Figure~\ref{fig:spread} returns to the bid--ask spread and considers the two-point informed-trader count \(P(N=0)=1-p\) and \(P(N=m)=p\), as in \eqref{eq:twopoint_spread}. For each asset distribution, we solve the fixed point for \(F\) and evaluate \(h^*(0^+)-h^*(0^-)\). The left panel varies the presence probability \(p\) at fixed \(m=3\), while the right panel varies the active number \(m\) at fixed \(p=0.5\). All four asset distributions are scaled to \(\operatorname{Var}(V)=1\). In all four specifications, the spread increases in both \(p\) and \(m\).

The asset distribution also affects the level and shape of the spread. These features are not determined by a single summary statistic, such as tail index, dispersion, or mean, but by the full law of \(V\). This final exercise should be therefore read as a numerical comparative static for the two-point specification of \(N\), rather than as a general monotonicity result for arbitrary informed-trader count distributions.

\begin{figure}[!htbp]
  \centering
  \includegraphics[width=0.8\textwidth]{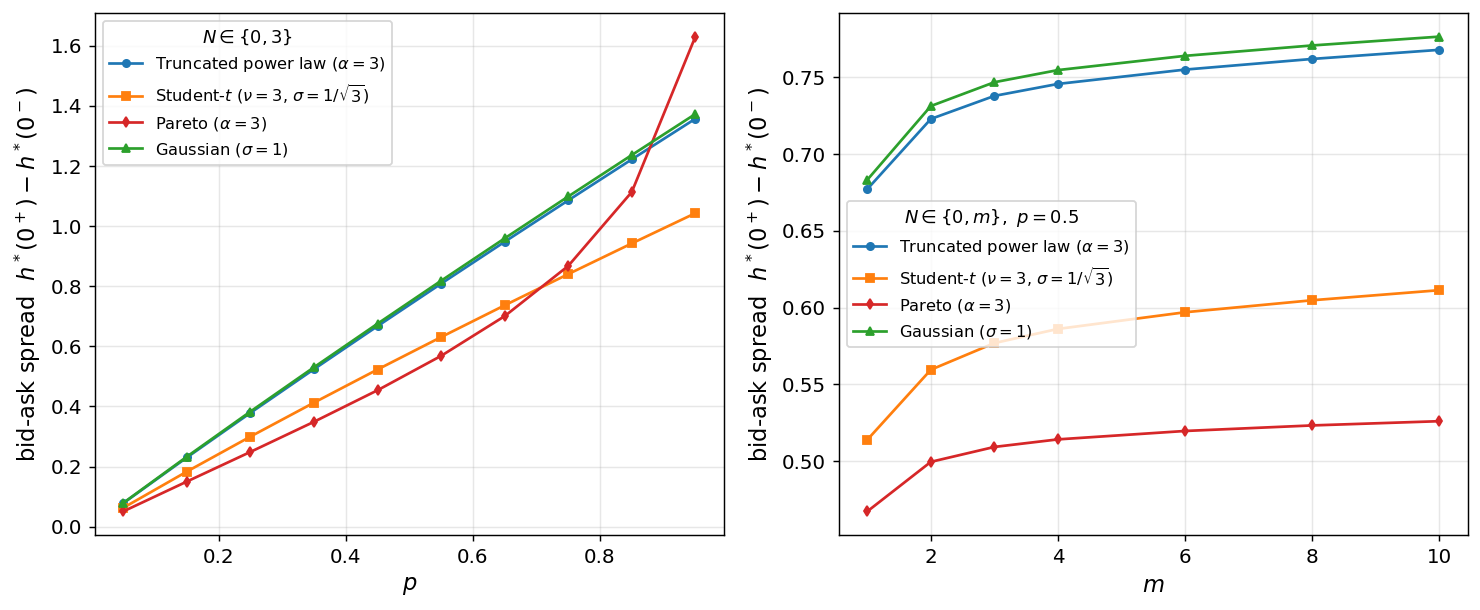}
  \caption{Equilibrium bid--ask spread \(h^*(0^+)-h^*(0^-)\) for a two-point
  informed-trader count \(N\in\{0,m\}\), with \(P(N=m)=p\). The four asset value
  distributions are scaled to \(\operatorname{Var}(V)=1\): truncated power law
  (\(\alpha=3\)), Student-\(t\) (\(\nu=3\)), one-sided Pareto (\(\alpha=3\)),
  and Gaussian. Left: spread as a function of the presence probability \(p\),
  with \(m=3\). Right: spread as a function of the active number \(m\), with
  \(p=0.5\).}
  \label{fig:spread}
\end{figure}

Finally, the insider's profit mentioned in Example \ref{example:binary} is examined numerically in Appendix~\ref{appendix:profit}.

\section{Conclusion}\label{section:Conclusion}

This paper studies a one period limit order market in which the number of informed traders is random. We characterize equilibrium through a fixed point equation for the marginal cost function \(F\), and recover the equilibrium limit order book \(h^*=\phi_F\) from the associated pricing operator. This framework nests the deterministic insider benchmark while allowing liquidity suppliers to price both adverse selection and uncertainty about the number of informed traders.

We also derive large order asymptotics for equilibrium price impact. In the bounded support case, the tail behaviour of \(M-F(x)\) is governed by a regular variation index determined jointly by the endpoint behaviour of the asset value distribution and the distribution of the number of informed traders. In the light endpoint regime, the model instead generates logarithmic price impact. These results show that the distribution of informed trading participation affects not only the level of liquidity but also the asymptotic shape of the limit order book. In particular, the effect of uncertainty about the informed trader count is not summarized by the expected number of informed traders alone.

The numerical experiments complement the theoretical results by solving the fixed point directly across several asset value and informed trader count specifications. They confirm the predicted power law and logarithmic tail behaviour in the cases covered by the theory, and document robust patterns beyond the compact support setting. The numerical results further suggest that, at a fixed conditional mean, greater dispersion in the number of informed traders can flatten the equilibrium book in the specifications considered. This pattern is consistent with a signal extraction interpretation: when the realized number of informed traders is more uncertain, aggregate order flow is less directly informative about the asset value, and liquidity suppliers may adjust marginal prices less aggressively. We view this dispersion effect as a numerical finding rather than a general comparative statics theorem.

\clearpage
\appendix

\section{Karamata's Theorems}\label{appendix:Karamata}

The following direct and converse forms of Karamata's theorem are used in the
derivation of the price-impact asymptotics; see, for example,
\cite[Theorems~1.5.11 and~1.6.1]{Bingham_Goldie_Teugels_1987}. They describe
the asymptotic behaviour of regularly varying functions when integrated against
power functions.

\begin{theorem}[Karamata's theorem: direct half]\label{thm:KaramataDirect}
Let \(f\) be regularly varying at infinity with index \(\rho\), and suppose that
\(f\) is locally bounded on \([X,\infty)\). Then:
\begin{enumerate}[label=(\arabic*)]
    \item for any \(\sigma\geq-(\rho+1)\),
    \[
        \frac{x^{\sigma+1}f(x)}
        {\int_X^x t^\sigma f(t)\,dt}
        \longrightarrow
        \sigma+\rho+1,
        \qquad x\rightarrow\infty;
    \]
    \item for any \(\sigma<-(\rho+1)\),
    \[
        \frac{x^{\sigma+1}f(x)}
        {\int_x^\infty t^\sigma f(t)\,dt}
        \longrightarrow
        -(\sigma+\rho+1),
        \qquad x\rightarrow\infty .
    \]
\end{enumerate}
\end{theorem}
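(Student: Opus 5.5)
This is the classical direct half of Karamata's theorem, and I would prove it by the standard route: the Uniform Convergence Theorem for regularly varying functions together with dominated convergence, with Potter's bounds supplying the domination. Set \(g(t):=t^\sigma f(t)\). Then \(g\) is regularly varying with index \(\sigma+\rho\) and locally bounded on \([X,\infty)\). Both statements then concern the ratio of \(x\,g(x)\) to an integral of \(g\). After the substitution \(t=xu\), the integral becomes \(x\,g(x)\int u^{\sigma+\rho}\,\frac{g(xu)}{g(x)u^{\sigma+\rho}}\,du\), taken over \([X/x,1]\) in case (1) and over \([1,\infty)\) in case (2).

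\emph{Case (2).} Here \(\sigma+\rho<-1\). Pointwise, \(g(xu)/g(x)\to u^{\sigma+\rho}\) for every \(u\geq1\). For domination on \([1,\infty)\), fix \(\delta>0\) with \(\sigma+\rho+\delta<-1\). Potter's bound then gives \(g(xu)/g(x)\le 2u^{\sigma+\rho+\delta}\) for \(x\ge X_0\) and \(u\ge1\), and this bound is integrable. Dominated convergence yields
\[
\frac{\int_x^\infty g(t)\,dt}{x\,g(x)}\to\int_1^\infty u^{\sigma+\rho}\,du=\frac{-1}{\sigma+\rho+1},
\]
which is the claim after inversion.

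\emph{Case (1).} Here \(\sigma+\rho+1\geq0\).

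In the subcase \(\sigma+\rho+1>0\), I split the integral \(\int_X^x\) at \(X_0\) and handle the two pieces separately:
\begin{itemize}
\item The piece \(\int_X^{X_0}g\) is a fixed constant by local boundedness. It is \(o(x\,g(x))\), because \(x\,g(x)\) is regularly varying with positive index and hence tends to infinity.
\item On \([X_0/x,1]\), Potter's bound gives \(g(xu)/g(x)\le 2u^{\sigma+\rho-\delta}\) with \(\sigma+\rho-\delta>-1\), which is integrable near \(0\). Dominated convergence then gives the limit \(\int_0^1u^{\sigma+\rho}\,du=1/(\sigma+\rho+1)\), and inversion gives the result.
\end{itemize}

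The main obstacle is the boundary subcase \(\sigma+\rho+1=0\), where the claimed limit is \(0\). Here the substituted integrand behaves like \(u^{-1}\) near \(0\), so no integrable dominating function exists. Instead I would show that \(\int_X^x g(t)\,dt/(x\,g(x))\to\infty\). By Fatou's lemma applied on \([\epsilon,1]\), the \(\liminf\) of this ratio is at least \(\int_\epsilon^1u^{-1}\,du=\log(1/\epsilon)\), and letting \(\epsilon\downarrow0\) completes the argument. One technical point must be checked here: Potter's bounds require local boundedness away from the origin together with positivity of \(f\), which is the setting of the cited reference (Bingham--Goldie--Teugels). I would simply invoke their Theorem~1.5.6 for Potter's bounds and Theorem~1.5.2 for uniform convergence, rather than reprove them.
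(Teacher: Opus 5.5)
Your proof is correct. The paper gives no proof of this statement; it only cites Bingham--Goldie--Teugels (Theorem~1.5.11), and your argument is essentially theirs: Potter's bounds plus dominated convergence when $\sigma+\rho+1\neq0$, and the lower bound over $[\epsilon x,x]$ giving $\log(1/\epsilon)$ when $\sigma+\rho+1=0$. One small correction to your final remark: the Potter bound in the form you use (for $x,y\geq X_0$) holds for any measurable positive regularly varying function, so local boundedness is needed only to make the fixed piece $\int_X^{X_0}t^\sigma f(t)\,dt$ finite.
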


\begin{theorem}[Karamata's theorem: converse half]\label{thm:KaramataConverse}
Let \(f\) be positive and locally integrable on \([X,\infty)\). Then:
\begin{enumerate}[label=(\arabic*)]
    \item if, for some \(\sigma>-(\rho+1)\),
    \[
        \frac{x^{\sigma+1}f(x)}
        {\int_X^x t^\sigma f(t)\,dt}
        \longrightarrow
        \sigma+\rho+1,
        \qquad x\rightarrow\infty,
    \]
    then \(f\) is regularly varying at infinity with index \(\rho\);

    \item if, for some \(\sigma<-(\rho+1)\),
    \[
        \frac{x^{\sigma+1}f(x)}
        {\int_x^\infty t^\sigma f(t)\,dt}
        \longrightarrow
        -(\sigma+\rho+1),
        \qquad x\rightarrow\infty,
    \]
    then \(f\) is regularly varying at infinity with index \(\rho\).
\end{enumerate}
\end{theorem}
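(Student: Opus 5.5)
The plan is to pass from the pointwise ratio condition to a statement about the integral of \(t^\sigma f(t)\), show that this integral is regularly varying by an explicit exponential representation, and then read off the regular variation of \(f\) itself. I would treat part (1) in detail; part (2) is the mirror image.

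For part (1), set
\[
I(x):=\int_X^x t^\sigma f(t)\,dt,\qquad b(x):=\frac{x^{\sigma+1}f(x)}{I(x)},\qquad c:=\sigma+\rho+1>0 .
\]
\begin{enumerate}[label=(\roman*)]
\item Since \(f\) is positive and locally integrable, \(I\) is positive for \(x>X\) and absolutely continuous. It satisfies \(I'(x)=x^\sigma f(x)=b(x)I(x)/x\) for almost every \(x\).
\item Consequently \(\log I\) is absolutely continuous on \([x_0,\infty)\) for any fixed \(x_0>X\), and integrating gives the representation
\[
I(x)=I(x_0)\exp\Big\{\int_{x_0}^x \frac{b(t)}{t}\,dt\Big\}.
\]
\item Hence, after the substitution \(t=\lambda s\),
\[
\frac{I(\lambda x)}{I(\lambda)}=\exp\Big\{\int_1^x \frac{b(\lambda s)}{s}\,ds\Big\}.
\]
By hypothesis \(b(\lambda s)\to c\) for each fixed \(s\), and \(b\) is eventually bounded. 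Dominated convergence on the compact interval between \(1\) and \(x\) then yields \(I(\lambda x)/I(\lambda)\to x^{c}\). So \(I\) is regularly varying with index \(c\).
\item Finally, write \(f(x)=b(x)\,I(x)\,x^{-\sigma-1}\). Then
\[
\frac{f(\lambda x)}{f(\lambda)}=\frac{b(\lambda x)}{b(\lambda)}\cdot\frac{I(\lambda x)}{I(\lambda)}\cdot x^{-\sigma-1}\longrightarrow 1\cdot x^{c}\cdot x^{-\sigma-1}=x^{\rho}.
\]
Here \(b(\lambda x)/b(\lambda)\to1\) uses \(c\neq0\), which holds because \(\sigma>-(\rho+1)\). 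This proves (1).
\end{enumerate}

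For part (2), I would run the same argument with the tail integral \(J(x):=\int_x^\infty t^\sigma f(t)\,dt\) in place of \(I\). The hypothesis implicitly requires \(J(x)\) to be finite, and it is positive. With \(b(x):=x^{\sigma+1}f(x)/J(x)\to c':=-(\sigma+\rho+1)>0\), one has \(J'(x)=-b(x)J(x)/x\) almost everywhere. This gives
\[
\frac{J(\lambda x)}{J(\lambda)}=\exp\Big\{-\int_1^x \frac{b(\lambda s)}{s}\,ds\Big\}\longrightarrow x^{\sigma+\rho+1}.
\]
Writing \(f=b\,J\,x^{-\sigma-1}\) then shows that \(f\) is regularly varying with index \(\rho\).

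The main obstacle is mostly technical rather than conceptual. The relation \(I'=bI/x\) holds only almost everywhere, so the exponential representation must be justified through absolute continuity of \(\log I\), not classical differentiability. The passage to the limit inside \(\int_1^x b(\lambda s)s^{-1}\,ds\) also needs eventual boundedness of \(b\), which follows from the assumed convergence of \(b\). Both points are handled as indicated; the rest is bookkeeping.
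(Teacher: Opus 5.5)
Your proof is correct and is essentially the standard argument for this result. The paper does not prove it but quotes it from Bingham, Goldie and Teugels (Theorem~1.6.1). That argument writes $\int_X^x t^\sigma f(t)\,dt$ (or the tail integral in part (2)) as $C\exp\{\pm\int_{x_0}^x b(t)\,dt/t\}$ with $b\to\pm(\sigma+\rho+1)$ and then recovers $f=b\,x^{-\sigma-1}\cdot(\text{integral})$. The only difference is cosmetic: the reference finishes via the Karamata representation theorem, whereas you verify the limit of $f(\lambda x)/f(\lambda)$ directly.
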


\section{Insider Profit under an Uncertain Count}\label{appendix:profit}

Example~\ref{example:binary} shows analytically, for a binary asset value and a
possibly absent monopolistic insider, that uncertainty about the insider raises the insider's expected profit. We complement this result numerically for continuous value distributions and a random number of insiders.

An insider who observes \(V\) trades the optimal quantity \(x^*=F^{-1}(V)\), so that \(F(x^*)=V\). 
The conditional expected profit is
\begin{equation}\label{eq:perinsider_profit}
  \Pi(V)
  :=
  x^*\big(F(x^*)-IS(x^*)\big)
  =
  E\!\left[
    \int_0^{x^*}\big(V-h^*(Z+Ny)\big)\,dy
    \,\Big|\, N\ge1
  \right],
  \quad x^*=F^{-1}(V).
\end{equation}
The last expression is the expected gain from executing the position \(x^*\)
against the equilibrium limit order book; its equality with \(x^*(F(x^*)-IS(x^*))\) follows from the definition of implementation shortfall.
In the computed equilibrium, \(\Pi(V)\) is positive away from \(V=0\).

\begin{figure}[!htbp]
  \centering
  \includegraphics[width=\textwidth]{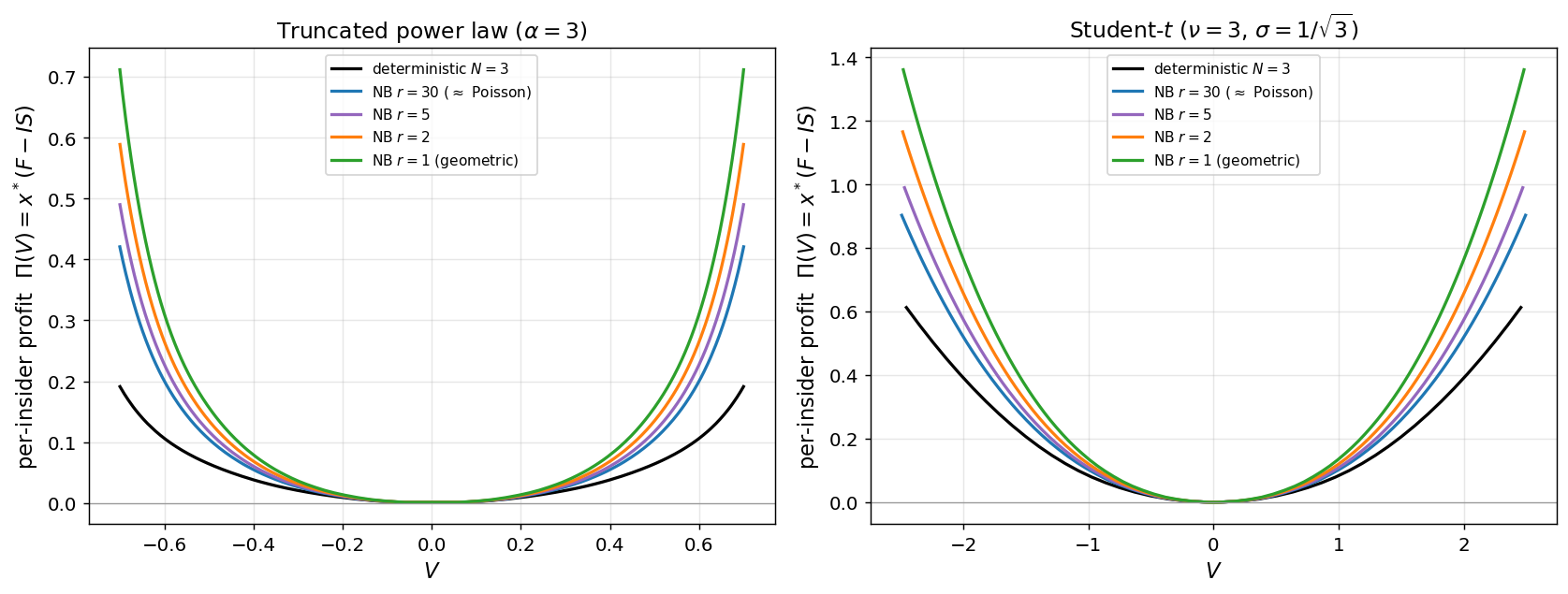}
  \caption{Conditional expected profit \(\Pi(V)\) in
  \eqref{eq:perinsider_profit} for the negative-binomial dispersion ladder at
  fixed \(E[N\mid N\ge1]=3\). Left: truncated power law with \(\alpha=3\).
  Right: Student-\(t\) with \(\nu=3\) and scale \(1/\sqrt3\).}
  \label{fig:perinsider_profit}
\end{figure}

Figure~\ref{fig:perinsider_profit} reports \(\Pi(V)\). The profit is positive
throughout and increases with \(|V|\). Across the dispersion ladder, with
\(E[N\mid N\ge1]=3\) fixed, per-insider profit is lowest in the least dispersed
case and increases as the informed-trader count becomes more dispersed, for both
the bounded and Student-\(t\) value distributions. This mirrors the impact-flattening pattern observed in the numerical experiments: a more dispersed informed-trader count is associated with a flatter marginal cost and limit price, allowing an insider with a given signal to trade a larger position at a lower average execution cost. 
Thus, in these numerical specifications, greater uncertainty about the number of informed traders raises per-insider expected profit, extending the conclusion of Example~\ref{example:binary} beyond the binary-value monopolistic benchmark.

\bibliographystyle{apalike}
\bibliography{biblio.bib}

\end{document}

%% file: 00_preamble.tex
\usepackage[english]{babel}
\usepackage[utf8]{inputenc}
\usepackage[round]{natbib} 

\usepackage[nohead]{geometry} 
\usepackage{setspace} 

\usepackage{hyperref} 
\usepackage{url} 

\usepackage{amsmath} 
\usepackage{amssymb} 
\usepackage{bbm}
\usepackage{amsfonts} 
\usepackage{latexsym} 
\usepackage{amsthm} 
\usepackage{mathrsfs} 
\usepackage{mathtools} 

\usepackage{booktabs} 
\usepackage{multirow} 
\usepackage{multicol} 
\usepackage{enumitem} 

\usepackage{graphics} 
\usepackage{subcaption}  
\usepackage{caption} 
\usepackage{comment} 
\usepackage{accents}
\usepackage{subcaption} 
\usepackage{indentfirst} 
\usepackage{neuralnetwork} 
\usepackage{tikz}
\usepackage{etoolbox} 
\usepackage{listofitems} 
\usepackage{rotating}

\usepackage{authblk} 

\hypersetup{
	colorlinks = {true}, 
	linkcolor = {red}, 
	urlcolor = {black}, 
	citecolor = {blue}
}

\usepackage[commentsnumbered, ruled, vlined]{algorithm2e}

\SetKwInput{KwData}{Data}
\SetKwInput{KwInput}{Input}
\SetKwInput{KwOutput}{Output}
\SetKwInput{KwInit}{Initialize}

\usepackage{lipsum} 
\usepackage{xargs} 
\usepackage[colorinlistoftodos, prependcaption, textsize = small]{todonotes}

\counterwithout{equation}{section}
\newtheorem{example}{Example}[section]
\newtheorem{remark}{Remark}[section]

\input{./def/def_letters.tex}

\numberwithin{equation}{section}
\theoremstyle{plain}

\newtheorem{theorem}{Theorem}[section]
\newtheorem{definition}[theorem]{Definition}
\newtheorem{lemma}[theorem]{Lemma}
\newtheorem{assumption}[theorem]{Assumption}
\newtheorem{proposition}[theorem]{Proposition}
\newtheorem{corollary}[theorem]{Corollary}
\theoremstyle{remark}


%% file: def/def_letters.tex
\def\b0{\boldsymbol{0}}

